% ****** Start of file apssamp.tex ******
%
%   This file is part of the APS files in the REVTeX 4.2 distribution.
%   Version 4.2a of REVTeX, December 2014
%
%   Copyright (c) 2014 The American Physical Society.
%
%   See the REVTeX 4 README file for restrictions and more information.
%
% TeX'ing this file requires that you have AMS-LaTeX 2.0 installed
% as well as the rest of the prerequisites for REVTeX 4.2
%
% See the REVTeX 4 README file
% It also requires running BibTeX. The commands are as follows:
%
%  1)  latex apssamp.tex
%  2)  bibtex apssamp
%  3)  latex apssamp.tex
%  4)  latex apssamp.tex
%
\documentclass[%
 reprint,
superscriptaddress,
%groupedaddress,
%unsortedaddress,
%runinaddress,
%frontmatterverbose, 
%preprint,
%preprintnumbers,
%nofootinbib,
%nobibnotes,
%bibnotes,
 amsmath,amssymb,
 aps,
%pra,
%prb,
%rmp,
%prstab,
%prstper,
%floatfix,
]{revtex4-2}

\usepackage{float}
\makeatletter
\let\newfloat\newfloat@ltx
\makeatother
\usepackage{graphicx}
\usepackage{graphicx}% Include figure files
\usepackage{dcolumn}% Align table columns on decimal point
\usepackage{bm}% bold math
\usepackage{graphicx}
\usepackage{amsmath}
\usepackage{amssymb}
\usepackage{subcaption}
\usepackage{subfig}
\usepackage{qcircuit}
\usepackage{algorithm}
\usepackage{algpseudocode}
\usepackage{cprotect}
\usepackage{lipsum}
\usepackage{booktabs}
\usepackage{array}
\usepackage{comment}
\usepackage{amsmath, amsthm, amssymb}
\usepackage{braket}
\usepackage{bm}
\usepackage{physics}
\usepackage{graphics}
\usepackage[dvipsnames]{xcolor}

\newtheorem{proposition}{Proposition}
\theoremstyle{definition}

%\usepackage{hyperref}% add hypertext capabilities
%\usepackage[mathlines]{lineno}% Enable numbering of text and display math
%\linenumbers\relax % Commence numbering lines

%\usepackage[showframe,%Uncomment any one of the following lines to test 
%%scale=0.7, marginratio={1:1, 2:3}, ignoreall,% default settings
%%text={7in,10in},centering,
%%margin=1.5in,
%%total={6.5in,8.75in}, top=1.2in, left=0.9in, includefoot,
%%height=10in,a5paper,hmargin={3cm,0.8in},
%]{geometry}
%\addbibresource{apssamp.bib}  
\makeatletter
\@ifundefined{sf@counterlist}{\newcommand{\sf@counterlist}{}}{}
\makeatother
\usepackage[colorlinks=true, allcolors=blue]{hyperref}

\usepackage{cleveref}

\begin{document}

\title{Improving Variational Quantum Circuit Optimization via Hybrid Algorithms and Random Axis Initialization}% Force line breaks with \\
%\thanks{A footnote to the article title}%

\author{Joona V. Pankkonen}
\email{Corresponding author: joona.pankkonen@aalto.fi}
 \affiliation{%
 Micro and Quantum Systems group, Department of Electronics and Nanoengineering,\\Aalto University, Finland
}%
\author{Lauri Ylinen}
 \affiliation{%
 Department of Mathematics and Statistics,\\University of Jyväskylä, Finland
}%
\author{Matti Raasakka}
 \affiliation{%
 Micro and Quantum Systems group, Department of Electronics and Nanoengineering,\\Aalto University, Finland
}%

\author{Ilkka Tittonen}
 \affiliation{%
 Micro and Quantum Systems group, Department of Electronics and Nanoengineering,\\Aalto University, Finland
}%

\date{\today}% It is always \today, today,
             %  but any date may be explicitly specified

\begin{abstract}
Variational quantum circuits (VQCs) are an essential tool in applying noisy intermediate-scale quantum computers to practical problems. VQCs are used as a central component in many algorithms, for example, in quantum machine learning, optimization, and quantum chemistry. Several methods have been developed to optimize VQCs. In this work, we enhance the performance of the well-known \verb|Rotosolve| method, a gradient-free optimization algorithm specifically designed for VQCs. We develop two hybrid algorithms that combine an improved version of \verb|Rotosolve| with the free quaternion selection (\verb|FQS|) algorithm, which is the main focus of this study. Through numerical simulations, we observe that these hybrid algorithms achieve higher accuracy and better average performance across different ansatz circuit sizes and cost functions. We identify a trade-off between the expressivity of the variational ansatz and the speed of convergence to the optimum: a more expressive ansatz ultimately reaches a closer approximation to the true minimum, but at the cost of requiring more circuit evaluations for convergence. By combining the less expressive but fast-converging \verb|Rotosolve| with the more expressive \verb|FQS|, we construct hybrid algorithms that benefit from the rapid initial convergence of \verb|Rotosolve| while leveraging the superior expressivity of \verb|FQS|. As a result, these hybrid approaches outperform either method used independently.
%\begin{description}
%\item[Usage]
%Secondary publications and information retrieval purposes.
%\item[Structure]
%You may use the \texttt{description} environment to structure your abstract;
%use the optional argument of the \verb+\item+ command to give the category %of each item. 
%\end{description}
\end{abstract}

%\keywords{Suggested keywords}%Use showkeys class option if keyword
                              %display desired
\maketitle

%\tableofcontents

\section{Introduction}
Hybrid quantum-classical variational algorithms are a promising tool to take advantage of the computational power of noisy intermediate-scale quantum (NISQ) computers. A computational advantage compared to the classical algorithms can be expected, in particular, for optimization problems and simulation of many-body quantum systems~\cite{Preskill2018quantumcomputingin}. However, there are several challenges with NISQ devices. In order to harness their full potential, error mitigation~\cite{quantum_error_mitigation1, quantum_error_mitigation2} and noise resistant algorithms~\cite{noise_resilience_algo1} need to be improved until we develop fault-tolerant quantum computing utilizing quantum error correction~\cite{quantum_error_correction}. 

Variational quantum circuits (VQCs) are the main building block of variational quantum algorithms~\cite{cerezo2021variational}. VQCs consist of a sequence of parameterized one- or two-qubit gates. For example, Variational Quantum Eigensolver~\cite{Kandala_2017} has been used for the Heisenberg antiferromagnetic model in an external magnetic field to find the ground state of the system. Other applications for VQCs range from quantum chemistry~\cite{quantum_chemistry}, hybrid quantum-classical machine learning~\cite{qml_reinforcement_learning, qml_generative, qml_qcnn}, and several optimization problems. In many cases, the parameter gradients of VQCs have been shown to vanish exponentially in the size of the ansatz circuit, the so-called barren plateau problem, which poses a serious challenge to the optimization of VQCs~\cite{cerezo2021variational}. However, these challenges can be mitigated to some extent with smart initialization techniques such as Gaussian initialization~\cite{gaussian_init}, Bayesian Learning initialization~\cite{bayesian_learning_init}, and Flexible Initializer for arbitrarily-sized Parametrized quantum circuits (FLIP)~\cite{flip_init}.

There is no single definite approach or algorithm to optimize VQCs. Several algorithms have been developed solely to optimize VQCs. In addition to the usual gradient-based methods, gradient-free methods have also been developed, such as \verb|Rotosolve| and \verb|Rotoselect| in~\cite{Ostaszewski_2021}, Free-Axis Selection (\verb|Fraxis|) in~\cite{fraxis}, and Free Quaternion Selection for quantum simulation (\verb|FQS|) in~\cite{Wada_2022}. In this work, we improve the performance of the well-known \verb|Rotosolve| method, which is a gradient-free optimization algorithm designed specifically for VQCs. We also create two hybrid algorithms consisting of the newly improved \verb|Rotosolve| version and \verb|FQS| algorithms, which are our main focus in this work.

First, we develop the modified version of \verb|Rotosolve|, \verb|Rotosolve-Haar|, where the generators of single-qubit gates are randomly sampled from the uniform distribution. Then, we combine \verb|Rotosolve-Haar| and \verb|FQS| into hybrid algorithms. Here, we observe a trade-off between the expressivity of the variational ansatz and the speed of convergence to the optimum: The more expressive the ansatz, the closer to the true minimum we get eventually, but the convergence to the optimum requires more circuit evaluations. By combining the two algorithms, the less expressive \verb|Rotosolve-Haar| and the more expressive \verb|FQS|, we obtain a family of hybrid algorithms, which combine the fast initial convergence of \verb|Rotosolve-Haar| with the better expressivity of \verb|FQS|, thus performing better than either one alone.

We investigate the performance of the algorithms using numerical simulations, considering the 5-, 6-, and 10-qubit one-dimensional Heisenberg model Hamiltonian, the 6- and 15-qubit two-dimensional Heisenberg model on a rectangular lattice, and the 4-qubit Hydrogen (H$_2$) molecular Hamiltonian. We also experiment with the algorithms' convergence to the randomly sampled $n$-qubit state with 4 and 5 qubits.

This work is structured as follows. In Sec.~\ref{VQC_section}, we go through the optimization of the variational quantum circuits, as well as \verb|Rotosolve| and \verb|FQS| methods. Then, in Sec.~\ref{hybrid_algo_section} we introduce the randomized version of \verb|Rotosolve|, termed \verb|Rotosolve-Haar|, and finally the proposed hybrid algorithms. In Sec.~\ref{results_section}, we show our results on one- and two-dimensional Heisenberg models as well as the hydrogen molecular Hamiltonian. Then, we present the hybrid algorithms' ability to converge to the randomly sampled $n$-qubit state in shallow circuits. Finally, in Sec.~\ref{section_conclusion} we conclude our work and discuss possible further research ideas.

\section{Optimization of variational quantum circuits} \label{VQC_section}

\subsection{Variational quantum circuits}
A variational quantum circuit (VQC) defines a unitary $U(\boldsymbol{\theta})$, which is parametrized by a set of parameters $\boldsymbol{\theta}$~\cite{cerezo2021variational}. Typically, the unitary operator corresponding to a variational quantum circuit with $L$ layers can be expressed as the product of the $L$ unitary operators as
\begin{equation}\label{unitaries}
    U(\boldsymbol{\theta}) = U_L(\boldsymbol{\theta}_L) \cdots U_2(\boldsymbol{\theta}_2) U_1(\boldsymbol{\theta}_1) \,.
\end{equation}
Denoting the number of qubits by $n$, the unitary $U_l(\boldsymbol{\theta}_l)$ corresponding to an individual layer may be expressed as
\begin{equation}
    U_l(\boldsymbol{\theta}_l) =  W_l \left( \bigotimes_{k=1}^n e^{-i \theta_{l,k} H_{l,k} / 2} \right) \,,
\end{equation}
where the index $k$ runs over the individual qubits. Here, $\theta_{l,k}$ is $k$-th element in parameter vector $\boldsymbol{\theta}_l$, $H_{l,k}$ is a single-qubit Hermitian operator generating the unitary $e^{-i \theta_{l,k} H_{l,k} / 2}$ which acts on the $k$-th qubit, and $W_l$ is a layer of entangling gates (e.g. CNOTs or controlled-Z gates). See Fig.~\ref{circuit1} for an example.

The main idea behind using VQCs is to minimize the objective or cost function with respect to the parameters $\bm{\theta}_l$. The cost function provides a quantitative measure of how well the given VQC performs on the given task. In general, a cost function $C(\boldsymbol{\theta})$ for a VQC is expressed as the expectation value of a Hermitian observable $\hat{M}$, obtained as
\begin{equation}\label{trace_expval}
    \langle \hat{M} \rangle = \text{Tr} \left( \hat{M} U(\boldsymbol{\theta}) \rho_0 U(\boldsymbol{\theta})^\dagger \right), \\[0.2cm]
\end{equation}
where $\rho_0 = \vert 0\rangle\langle0 \vert^{\otimes n}$ represents the initial state of the circuit.

% @C=2cm gives width between gates on same qubit
% @R=1 gives width for different qubits
\begin{figure}
    \[
    \Qcircuit @C=0.95em @R=0.9em {
    & \mbox{$L$ layers} & & & & \\
    & & & & & & & \\
     \lstick{\ket{0}_1} & \gate{R_{1}(\theta_{1})}  &  \ctrl{0} & \qw & \qw & \cdots & &  \gate{R_{Ln-4}(\theta_{Ln-4})} &  \ctrl{0} & \qw & \meter\\
     \lstick{\ket{0}_2}&  \gate{R_{2}(\theta_{2})} &  \ctrl{-1} & \ctrl{1}  & \qw & \cdots & &  \gate{R_{Ln-3}(\theta_{Ln-3})} &  \ctrl{-1} & \ctrl{1} & \meter\\
     \lstick{\ket{0}_3}&  \gate{R_{3}(\theta_{3})}  &  \ctrl{0} & \ctrl{0} & \qw &  \cdots & &  \gate{R_{Ln-2}(\theta_{Ln-2})} &  \ctrl{0} & \ctrl{0} & \meter\\
     \lstick{\ket{0}_4}&  \gate{R_{4}(\theta_{4})}  &  \ctrl{-1} & \ctrl{0} & \qw &  \cdots & &  \gate{R_{Ln-1}(\theta_{Ln-1})} &  \ctrl{-1} & \ctrl{0} & \meter\\
     \lstick{\ket{0}_5}&  \gate{R_{5}(\theta_{5})} &  \qw & \ctrl{-1} & \qw &  \cdots & &  \gate{R_{Ln}(\theta_{Ln})} & \qw & \ctrl{-1} & \meter  \gategroup{3}{2}{7}{4}{1.2em}{--} 
    }
    \]
    \cprotect\caption{An example of the circuit ansatz architecture for the 5-qubit structure optimization. Here \newline $R_i \in \{R_X, R_Y, R_Z \}$ is a single-qubit rotation gate and $\theta_i$ the corresponding angle parameter for $i=1,\ldots,Ln$.}
    \label{circuit1}
\end{figure}

\subsection{Rotosolve} \label{Rotosolve_section}

First, we assume that each layer has one parameterized single-qubit gate for each qubit. Then the parameter vector $\bm{\theta}$ consists of a total of $Ln$ optimizable parameters $\bm{\theta} = (\theta_1, \ldots \theta_{Ln-1}, \theta_{Ln})$, where $L$ and $n$ denote the number of layers and qubits in the circuit, respectively. We represent the $l$-th layer of fixed two-qubit gates (e.g., CNOTs or controlled-Z gates) as $W_l$. 

The $d$-th single-qubit gate $R_d$ ($d=1,\ldots,Ln$) is parameterized with a parameter $\theta_d \in \left(-\pi, \pi \right]$ as
\begin{equation}\label{rotosolve_gate}
    R_d(\theta_d) = \cos \left( \frac{\theta_d}{2} \right)I - i \sin \left( \frac{\theta_d}{2}\right) H_d,
\end{equation}\\
where $H_d$ is a Hermitian unitary generator with $H_d ^2 = I$. Each gate $R_d$ needs to be represented as a $2^n \times 2^n$ unitary matrix to act on the circuit. We achieve this by constructing a tensor product of a parameterized single-qubit gate $R_d$ that acts on a particular qubit and the $2\times 2$ identity matrices $I$ that act on all other qubits. With this in mind, we may express the unitary of the parameterized gate with the tensor product as follows

\begin{eqnarray} \label{rotosolve_gate}
    R_d' (\theta_d) =&& I \otimes \cdots \otimes I\otimes \left[ \cos\left( \frac{\theta_d}{2}\right)I - i \sin\left( \frac{\theta_d}{2}\right) H_d \right] \nonumber \\ && \otimes I \otimes\cdots \otimes I. 
\end{eqnarray}
 The expression for the unitary $U_l(\bm{\theta}_l)$ then becomes

\begin{equation}
    U_l(\bm{\theta}_l) = W_l R_{ln}'(\theta_{ln})R_{ln-1}'(\theta_{ln-1}) \cdots R_{(l-1)n +1}'(\theta_{(l-1)n+1}) ,
\end{equation}\\
where the index for $R_d'$ is determined from the layer index $l$ and the number of qubits $n$ to match the layer ansatz in Fig.~\ref{circuit1}.

Expressing the cost function $\hat{M}$ as a trace with initial state $\rho_0$ and the product of unitaries as shown in Eq.~(\ref{trace_expval}), the whole expression then becomes

\begin{widetext}
\begin{equation} \label{hat_M}
\begin{split}
        \langle \hat{M} \rangle = \text{Tr} & \Bigl( \hat{M} W_L  R_{Ln}'(\theta_{Ln})\cdots W_l R_{ln}'(\theta_{ln}) \cdots R_{d+1}'(\theta_{d+1})R_{d}'(\theta_{d}) R_{d-1}'(\theta_{d-1}) \cdots R_{2}'(\theta_{2}) R_{1}'(\theta_{1}) \\
        & \times \rho_0 R_1'(\theta_1)^\dagger R_{2}'(\theta_{2})^\dagger \cdots R_{d-1}'(\theta_{d-1})^\dagger R_{d}'(\theta_{d})^\dagger R_{d+1}'(\theta_{d+1})^\dagger\cdots R_{ln}'(\theta_{ln})^\dagger W_l^\dagger \cdots R_{Ln}'(\theta_{Ln})^\dagger W_L^\dagger\Bigr) \\[0.2cm]
\end{split}
\end{equation}
\end{widetext}
From here, we define the quantum circuits that come before and after the $d$-th parameterized gate $R_d$ as $A$ and $B$, respectively. Both $A$ and $B$ depend on the values $\theta_k$, where $k\neq d$. Then, the compressed form of the expectation value for $\langle\hat{M}\rangle$ reads
\begin{equation}\label{hat_M_short}
    \langle \hat{M} \rangle =  \text{Tr}\left(\hat{M} A R_d'(\theta_d) B \rho_0 B^\dagger R_d '(\theta_d)^\dagger A^\dagger\right).
\end{equation}
By using the cyclic property of the trace operation and defining 
\begin{align}
    \rho &\equiv B \rho_0 B^\dagger, \label{rho_M1}\\
    M & \equiv A^\dagger \hat{M} A, \label{rho_M2}
\end{align}

\noindent
we arrive at a simpler form of the expectation value of the cost function~\cite{Ostaszewski_2021, fraxis}
\begin{equation} \label{hermitian_observable}
    \langle M\rangle =  \text{Tr}\left( M R_d'(\theta_d) \rho R_d'(\theta)^\dagger\right).\\[0.2cm]
\end{equation}
By following the notation from Ref.~\cite{Ostaszewski_2021} the cost function in the above equation can be expressed w.r.t. parameters $\theta_d$ as 
\begin{eqnarray}
    \langle M\rangle_{\theta_d} &&=\cos^2 \left(\frac{\theta_d}{2} \right) \text{Tr} (M\rho) \nonumber \\[0.1cm]&&  + i \sin \left(\frac{\theta_d}{2} \right) \cos \left(\frac{\theta_d}{2} \right) \text{Tr}(M [\rho, H_d ']) \nonumber \\[0.1cm]&& +\sin^2 \left(\frac{\theta_d}{2} \right) \text{Tr}(MH_d '\rho H_d ').    
\end{eqnarray}
\\
Here we have defined $H_d' = I \otimes\cdots \otimes I \otimes H_d \otimes I \otimes \cdots \otimes I$ to be $2^n\times 2^n$ matrix to match the dimensions of $\rho$ and $M$. The traces can be computed using the expectation values by setting different values for $\theta_d$. The traces are evaluated for the angles $\theta_d = 0, \pi/2, -\pi/2, $ and $\pi$ as follows (see full details in~\cite{Ostaszewski_2021})
\begin{align} 
    \text{Tr} (M\rho) &= \langle M \rangle_0, \label{rotosolve_measurements1} \\[0.2cm]
    i \text{Tr}(M [\rho, H_d ']) &= \langle M \rangle_{\frac{\pi}{2}} -  \langle M \rangle_{-\frac{\pi}{2}}, \label{rotosolve_measurements2} \\[0.2cm]
    \text{Tr}(MH_d '\rho H_d ') &= \langle M \rangle_\pi. \label{rotosolve_measurements3} 
\end{align}
By using trigonometric identities, $\langle M \rangle_{\theta_d}$ can be expressed as a sinusoidal function $\langle M \rangle_{\theta_d} = A \sin(\theta_d + B) + C$, where the coefficients $A, B$ and $C$ are composed of traces from Eqs.~\eqref{rotosolve_measurements1}--\eqref{rotosolve_measurements3} 

\begin{align}
    A &= \frac{1}{2}\sqrt{(\langle M \rangle_{0} -  \langle M \rangle_{\pi})^2 + (\langle M \rangle_{\frac{\pi}{2}} -  \langle M \rangle_{-\frac{\pi}{2}})^2}, \\
    B&=\arctan2 \left( \langle M \rangle_{0} -  \langle M \rangle_{\pi}, \langle M \rangle_{\frac{\pi}{2}} -  \langle M \rangle_{-\frac{\pi}{2}}\right), \\
    C&= \frac{1}{2} \left( \langle M \rangle_{0} +  \langle M \rangle_{\pi}\right).
\end{align}\\
The minimum is located at $\theta_d^* = -\frac{\pi}{2} - B + 2\pi k$, where $k \in \mathbb{Z}$. At first glance, the coefficient $B$ requires 4 circuit evaluations in total to optimize the gate $R_d$. However, the number of circuit evaluations needed can be reduced to three by rewriting the coefficient $B$ with an arbitrary angle $\phi$ as (see Ref.~\cite{Ostaszewski_2021})

\begin{eqnarray}
    B &&= \arctan2 ( 2\langle M \rangle_{\phi} -  \langle M \rangle_{\phi + \frac{\pi}{2}} - \langle M \rangle_{\phi - \frac{\pi}{2}},\nonumber \\ && \quad \quad \quad \quad \quad \langle M \rangle_{\phi + \frac{\pi}{2}} - \langle M \rangle_{\phi - \frac{\pi}{2}} ) - \phi.
\end{eqnarray}

Finally, the optimal value of $\theta_d$ for the gate $R_d$ can be written as follows 

\begin{eqnarray}
    \theta_d ^* &&= \underset{\theta_d}{\arg \min} \langle M \rangle_{\theta_d} \nonumber \\
    &&= \phi - \frac{\pi}{2} - \arctan2 (2  \langle M \rangle_{\phi} -  \langle M \rangle_{\phi+ \frac{\pi}{2}} - \langle M \rangle_{\phi - \frac{\pi}{2}}, \nonumber \\ && \quad \quad \quad \quad \quad \quad \quad \quad  \langle M \rangle_{\phi+ \frac{\pi}{2}} - \langle M \rangle_{\phi - \frac{\pi}{2}} ) +2\pi k,
\end{eqnarray}\\
where $k \in \mathbb{Z}$. In this work, we set $\phi$ to zero but it can also be chosen arbitrarily.

This optimization of parameters $\theta_d$ is used in the \verb|Rotosolve| and \verb|Rotoselect| algorithms in~\cite{Ostaszewski_2021}. \verb|Rotosolve| initializes all parameterized gates at random and then optimizes one parameter at a time until all gates are optimized in the circuit. %Next, we show how to utilize a random initialization method on its own and with a combination of \verb|FQS| algorithm.

\subsection{Free Quaternion Selection} \label{fqs_section}

Now we explain how the free quaternion selection (\verb|FQS|) from Ref.~\cite{fqs_algo} works in general. \verb|FQS| is an extension of the \verb|Fraxis| algorithm~\cite{fraxis} which optimizes a three-dimensional axis with fixed rotational angle $\pi$. Both \verb|Fraxis| and \verb|FQS| are gradient-free sequential algorithms that are based on matrix diagonalization. The elements of the matrix are computed from the expectation values of the Hermitian observable $\hat M$. \verb|FQS| uses a quaternion selection of a single-qubit gate $R_{\bm{n}}(\psi)$ and allows the cost function to be rewritten as a solvable quadratic function, making optimization more efficient. In this section, we follow the notation used in Ref.~\cite{fqs_algo}.

Similarly to the previous section, we define a $2^n\times 2^n$ matrix, constructed as the tensor product of a parameterized single-qubit gate applied to a specific qubit and the identity matrix applied to all remaining qubits. To simplify the notation, we drop the subscript $d$ from the angle variables $\psi, \theta$ and $\phi$ used in this section. Then the unitary of the $d$-th single-qubit gate acting in the circuit is expressed as
\begin{eqnarray}\label{fqs_gate_def}
    R_{\bm{n}_d}'(\psi) &= I \otimes \cdots \otimes I\otimes \left[ \cos \left( \frac{\psi}{2} \right) I - i \sin \left( \frac{\psi}{2} \right) (\bm{n}_d\cdot \bm{\sigma}) \right] \nonumber \\& \otimes I \otimes \cdots \otimes I, 
\end{eqnarray}
where $I$ is the identity operator for single-qubit,  $\bm{\sigma}=(\sigma_x, \sigma_y, \sigma_z)$ is the 3-vector of Pauli matrices, and $\bm{n}_d$ is a unit 3-vector. By using a spherical coordinate system to express $\bm{n}_d$ with zenith angle $\theta$ and azimuth angle $\phi$, we have then 
\begin{equation}\label{spherical_coords}
    \bm{n}_d = \bm{n}_d(\theta, \phi) = (\cos \theta, \sin \theta \cos \phi, \sin \theta\sin \phi).
\end{equation}
By defining an extended vector of Pauli matrices as $\bm{\varsigma} \equiv (\varsigma_I, \varsigma_x, \varsigma_y, \varsigma_z) = (I, -i \sigma_x, -i \sigma_y, -i\sigma_z)$, we can write the single-qubit gate unitary with a unit quaternion as 
\begin{eqnarray} \label{quaternion_rep}
    R_{\bm{n}_d (\theta, \phi)}'(\psi) =&& I \otimes \cdots \otimes I \otimes \left[ \bm{q}_d(\psi, \theta, \phi)\cdot \bm{\varsigma} \right] \nonumber \\ &&\otimes I \otimes \cdots \otimes I \equiv R'(\bm{q}_d),
\end{eqnarray}
where the unit quaternion $\bm{q}_d = (q_0, q_1, q_2, q_3)$ can be expressed with angles $\psi, \theta$ and $\phi$ as follows:
\begin{align}
\begin{split}
  q_0 &= \cos \left( \frac{\psi}{2} \right), \\
  q_1 &= \sin \left( \frac{\psi}{2} \right) \cos \theta \\
  q_2 &= \sin \left( \frac{\psi}{2} \right) \sin \theta \cos \phi\\
  q_3 &= \sin \left( \frac{\psi}{2} \right) \sin \theta \sin \phi.
\end{split}
\end{align}

Now we focus on the optimization of the single-qubit gate $R(\bm{q}_d)$ which is defined without the tensor products with the identity matrices as in Eq.~(\ref{rotosolve_gate}) in the previous section. Likewise, in the \verb|Rotosolve| algorithm, we fix all gates except the $d$-th one. The expectation value for the Hermitian operator $\hat{M}$ has the exact same form as in Eq.~(\ref{hat_M}) in the previous section. Again, we define quantum circuits that come before and after the $d$-th parameterized gate $R_d$ as $A$ and $B$, respectively. Here, $A$ and $B$ depend on the quaternion values $\bm{q}_k,$ where $k \neq d$. Thus, we obtain the same form for $\langle\hat{M}\rangle$ as in Eq.~(\ref{hat_M_short}). Then, using the definitions from Eqs.~(\ref{rho_M1}) and~(\ref{rho_M2}) we have
\begin{equation}
    \langle M \rangle_{\bm{q}_d} = \text{Tr} \left(M R'( \bm{q}_d) \rho R'(\bm{q}_d)^\dagger \right).
\end{equation}
Using the quaternion representation from Eq.~(\ref{quaternion_rep}) we obtain the quadratic form for the expectation value
\begin{equation}
    \langle M \rangle_{\bm{q}_d} = \bm{q}_d ^T S \bm{q}_d .
\end{equation}
Here we have absorbed the Hermitian observable $M$ and the density matrix $\rho$ into the matrix $S = (S_{\mu\nu})$, which is a symmetric and real $4 \times 4$ matrix. The vector $\bm{q}_d^T$ is the transpose of $\bm{q}_d$, and the matrix elements of $S$ are defined as
\begin{equation}
    S_{\mu \nu} = \frac{1}{2} \text{Tr}\left[M \left( \varsigma_\mu' \rho \varsigma_\nu '^\dagger +  \varsigma_\nu' \rho \varsigma_\mu'^\dagger \right)\right].
\end{equation}
Here, we have defined $\varsigma_\mu ' = I \otimes \cdots \otimes I \otimes \varsigma_\mu \otimes I \otimes \cdots \otimes I$ as a $2^n \times 2^n$ matrix to match the dimensions of $\rho$ and $M$, similar to $H_d'$ in the previous section. 

To compute all the required matrix elements $S_{\mu \nu}$, a total of 10 circuit evaluations are needed, as it is symmetric and real. Each entry in the upper diagonal of the matrix corresponds to a specific combination of indices $\mu$ and $\nu$, which are classified into three types. See full details in Ref.~\cite{fqs_algo}. After forming the matrix $S$, the quadratic form of the expectation value $\langle M \rangle_{\bm{q}_d}$ is minimized by solving the eigenvector of $S$ corresponding to the lowest eigenvalue. This eigenvector corresponds to the optimal coefficients of the quaternion $\bm{q}_d$. Next, we show how to utilize a random initialization method on its own and with a combination of the \verb|FQS| algorithm.

\section{Random axis initialization method and hybrid algorithms}\label{hybrid_algo_section}

%\verb|Rotoselect| on the other hand optimizes the axis of rotation in addition to the parameters $\theta_d$. \verb|Rotoselect| picks the optimal axis of rotation $R_X, R_Y$ or $R_Z$ with the lowest expectation value $\langle M \rangle_{\theta_d}$ when optimizing the parameters. 

We intend to generalize single-qubit parameterized gates by applying a unitary transformation $V$ on a single-qubit gate generator $H_d = Z$ to obtain a new generator $G_d$ as follows:
\begin{equation}
    G_d = V H_d V^\dagger.
\end{equation}
This was originally suggested in Ref.~\cite{Ostaszewski_2021}. Obviously, the generator retains its Hermiticity and unitarity under conjugation with a unitary, so we still have $(G_d)^2 = I$. Therefore, we obtain a new single-qubit gate as $R_d(\theta_d) = \exp \left(-i \frac{\theta_d}{2}V_d H_d V_d ^\dagger \right)$, where $V_d$ is $d$-th randomly sampled instance from the Haar measure. We remark that the gate acting on multiple qubits can be defined analogously to Eqs.~(\ref{rotosolve_gate}) and~(\ref{fqs_gate_def}).

Now we propose a random axis initialization method based on the unitary transformation for single-qubit generators $H_d$. The randomized \verb|Rotosolve| algorithm is defined in Algorithm~\ref{roto_haar}. The Algorithm~\ref{roto_haar} in its core is the same as the original \verb|Rotosolve| algorithm presented in Ref.~\cite{Ostaszewski_2021}, except that the single-qubit gate generators are randomized at the start.  We may call this version of \verb|Rotosolve| from now on \verb|Rotosolve-Haar|. The random axis initialization provides a way to express more complex quantum states with the same ansatz, i.e. higher expressibility when dealing with deep VQCs. We demonstrate this in the following section. We emphasize that the representation of \verb|Rotosolve-Haar| enables us to combine \verb|Rotosolve| and \verb|FQS| into the hybrid algorithms, since they require a change of representation of parameters from one algorithm to another. That is, when switching from the quaternion representation used in \verb|FQS| to \verb|Rotosolve-Haar|, each quaternion must be converted into the corresponding unitary $V_d$ and rotation angle $\theta_d$ used by \verb|Rotosolve-Haar|, ensuring that the gates remain equivalent.

\begin{algorithm}
    \caption{Rotosolve-Haar (Haar-random unitary initialization for Rotosolve)}\label{roto_haar}
    \begin{algorithmic}[1]
    \State \textbf{Inputs}: A Variational Quantum Circuit $U$ with fixed architecture, Hermitian measurement operator as the cost function, and a stopping criterion.
    \State Initialize $\theta_d \in (-\pi, \pi]$ for $d = 1, \ldots, Ln$.
    \State Sample random unitaries $V_d$ from the Haar measure for $d = 1, \ldots, Ln$.
    \State Compute new generators $G_d = V_d Z V_d ^\dagger$ for $d = 1, \ldots, Ln$.
    \State Set $\phi = 0$.
    \Repeat
        \For{$d = 1, \ldots, Ln$} %\Comment{Perform Rotosolve algorithm.}
            %\State Select a value $\phi \in \mathbb{R}$ heuristically or at random
            \State Fix all angles except the $d$-th one
            \State Estimate $\langle M \rangle_\phi$, $\langle M \rangle_{\phi+\frac{\pi}{2}}$ and $\langle M \rangle_{\phi-\frac{\pi}{2}}$ from samples
            \State $\theta_d \gets \phi - \frac{\pi}{2} - \arctan2(2\langle M \rangle_\phi - \langle M \rangle_{\phi+\frac{\pi}{2}} - \langle M \rangle_{\phi-\frac{\pi}{2}}, \langle M \rangle_{\phi+\frac{\pi}{2}} - \langle M \rangle_{\phi-\frac{\pi}{2}})$
        \EndFor
    \Until{stopping criterion is met}
    \end{algorithmic}
\end{algorithm}

Finally, we present two hybrid algorithms consisting of \verb|Rotosolve-Haar| (Algorithm~\ref{roto_haar}) and \verb|FQS| from Ref.~\cite{fqs_algo}. The first hybrid algorithm is iteration-based, where we use algorithm $\mathcal{A}$ for one iteration of optimization and another algorithm $\mathcal{B}$ otherwise, and it is defined in Algorithm~\ref{hybrid_algorithm}. To simplify the explanation, we consider replacing algorithms $\mathcal{A}$ and $\mathcal{B}$ with \verb|FQS| and \verb|Rotosolve-Haar|, respectively. The algorithm starts by initiating the parameters in the same way as in the randomized version of the \verb|Rotosolve| algorithm, \verb|Rotosolve-Haar|. We also set a variable $N$ to a positive integer value, which we use to determine the algorithm we use in the current iteration. For every $N$-th iteration, we execute the \verb|FQS| algorithm for the whole iteration, otherwise, we use \verb|Rotosolve-Haar|. When we switch from \verb|Rotosolve-Haar| to \verb|FQS|, the parameters first need to be converted to a quaternion representation. During conversion to the quaternion form, we convert one gate at a time and extract the coefficients of $\bm{q}_d$ for all gates. After the \verb|FQS| optimization iteration, the parameters in quaternion form must be converted back and expressed as $\exp \left({-i \frac{\theta_d}{2} V_d H_d V_d ^{\dagger}} \right)$. To obtain the correct $V_d$ and $\theta_d$, we solve the equation $R(\bm{q}_d) = \exp \left({-i \frac{\theta_d}{2} V_d H_d V_d ^{\dagger}} \right)$ w.r.t. $\theta_d$ and $V_d$ for each $d$. We provide the proof that the angle $\theta$ and unitary $V$ are solvable from a given unitary $U$ in Appendix~\ref{appendix_rotosolve_haar_proof}. In the experiments, we also test a hybrid algorithm where we switch places of \verb|FQS| and \verb|Rotosolve-Haar|. That is, we use \verb|Rotosolve-Haar| every $N$-th iteration in the optimization, and otherwise \verb|FQS| is used.

\begin{algorithm}
\caption{Iteration-specific hybrid algorithm}\label{hybrid_algorithm}
\begin{algorithmic}[1]
%\State Initialize all gates as done in randomized Rotosolve Algorithm~\ref{roto_haar}
\State \textbf{Inputs}: A Variational Quantum Circuit $U$ with fixed architecture, Hermitian measurement operator as the cost function, a stopping criterion, and two algorithms $\mathcal{A}$ and $\mathcal{B}$ (e.g. \verb|Rotosolve-Haar| and \verb|FQS|, respectively).
\State Initialize $\theta_d \in (-\pi, \pi]$ for $d = 1, \ldots, Ln$.
\State Set running index $i = 1$
\State Set an integer value to $N = k$, where $k \in \mathbb{Z}_+$
\Repeat
    \State $r \gets i \bmod N$
    \If {r = 0} %\Comment{Perform FQS algorithm.}
        %\State Convert all gates in the circuit to quaternion form described in Ref.~\cite{fqs_algo}
        \For{$d = 1, \ldots, Ln$}
            \State Fix all gates except the $d$-th one
            \State Optimize $R_d$ with the algorithm $\mathcal{A}$

        \EndFor
            
    \Else  
    \For{$d = 1, \ldots, Ln$}
        \State Fix all gates except the $d$-th one 
        \State Optimize $R_d$ with the algorithm $\mathcal{B}$
    \EndFor
    \EndIf
    
    \State $i \gets i + 1$
\Until{stopping criterion is met}
\end{algorithmic}
\end{algorithm}

The second hybrid algorithm that we propose is a gate-specific hybrid algorithm of \verb|Rotosolve-Haar| and \verb|FQS|, which is defined in Algorithm~\ref{markov_chain_hybrid}. As the name suggests, we optimize the given gate \verb|Rotosolve-Haar| or \verb|FQS| in a probabilistic manner. In optimization, we fix all gates except the $d$-th one and with a probability $p$  we optimize it by using \verb|Rotosolve-Haar|, otherwise the gate is optimized by the \verb|FQS| algorithm. In each case, we transform the gate to the correct representation before performing the optimization. Note that the probability $p$ remains fixed during the whole optimization process.

\begin{algorithm}
\caption{Gate-specific hybrid algorithm}\label{markov_chain_hybrid}
\begin{algorithmic}[1]
\State Initialize all gates as done in \verb|Rotosolve-Haar| (Algorithm~\ref{roto_haar}).
\State Set a fixed value for probability threshold $p$
\Repeat 
        \For{$d = 1, \ldots, Ln$}
            \State Fix all gates except the $d$-th one
            \State $r \gets \text{Uniform}(0,1)$
            \If {$r < p$} % \Comment{Perform Rotosolve algorithm.}
                %\State Convert $d$:th gate to form $H_d' = V_d Z V_d ^\dagger$ and $\theta_d$ if in quaternion presentation
                \State Find $\theta_d\in \mathbb{R}$ and $V_d\in U(2)$ such that $U_d=e^{-i\frac{\theta_d}{2} V_d H_d V_d^\dagger}$ (see Appendix~\ref{appendix_rotosolve_haar_proof})
                \State Optimize $e^{-i\frac{\theta_d}{2} V_d H_d V_d^\dagger}$ over $\theta_d$ with \verb|Rotosolve|

            \Else % \Comment{Perform FQS algorithm.}
                %\State Convert $d$:th gate to quaternion form by solving coefficients $(q_0, q_1, q_2, q_3)$ if not in quaternion presentation
                \State Optimize $R_d$ with \verb|FQS|
            \EndIf
        \EndFor
            
\Until{stopping criterion is met}
\end{algorithmic}
\end{algorithm}

As a stopping criterion, we use the number of circuit evaluations as a metric for both hybrid algorithms. In the case of \verb|Rotosolve-Haar| gates, 3 circuit evaluations are needed and 10 for \verb|FQS| gates, respectively. In this work, we also compare our methods with \verb|Fraxis|, which requires 6 circuit evaluations to optimize the single-qubit gate. We focus on experimenting with our methods as a function of circuit evaluations in all experiments to obtain a detailed comparison of algorithms. A collection of different algorithms, as well as the number of circuit evaluations needed and their parameter used, is provided in Table~\ref{algo_table}. Here, the circuit evaluations needed on average with hybrid algorithms are expressed as an expectation value for one gate. In Table~\ref{parameter_values_table}, we provide concrete numbers for the average number of circuit evaluations for each gate while using iteration- and gate-specific hybrids with different values of $p$ and $N$.

\begin{table*}
\centering
\scalebox{0.9}{%
\begin{tabular}{|c |c | c | c | c | c | c | c |} 
 \hline
 Algorithm & Rotosolve & Rotosolve--Haar & Fraxis & FQS & Gate-specific hybrid & FQS every $N$  & Rotosolve--Haar every $N$ \\ [0.2ex] 
 \hline\hline
 Optimization variable & Angle $\theta$ & Angle $\theta$ & Axis $\bm{n}$ & Quaternion $\bm{q}$ & $\theta$ or $\bm{q}$ & $\theta$ or $\bm{q}$ & $\theta$ or $\bm{q}$  \\ 
 \hline
 Circuit evaluations & 3 & 3 & 6 & 10 & $3p + 10(1-p)$ & $[3(N-1) + 10]/N$ & $[10(N-1) + 3]/N$ \\
 \hline
\end{tabular}}
\caption{A collection of different algorithms and their optimization methods.} 
\label{algo_table}
\end{table*}

\begin{table*}
\centering
\scalebox{1.2}{%
\begin{tabular}{|c |c | c | c | c | c | c | c | c| c | c |} 
\hline
\multicolumn{1}{|c|}{Algorithm} & 
\multicolumn{4}{|c|}{Gate-specific hybrid} &
\multicolumn{3}{|c|}{FQS every $N$} &
\multicolumn{3}{|c|}{Rotosolve--Haar every $N$} \\
\hline
 Parameter value & $p=0.2$ & $p=0.4$ & $p=0.6$ & $p=0.8$ & $N=2$ & $N=3$ & $N=4$ & $N=3$ & $N=4$ & $N=5$ \\ 
 \hline
 Circuit evaluations & 8.6 & 7.2 & 5.8 & 4.4 & 6.5 & 5.33 & 4.75 & 7.67 & 8.25 & 8.6 \\
 \hline
\end{tabular}}
\caption{Circuit evaluations on average needed to optimize one gate while using the hybrid methods with different parameter values of $p$ and $N$.} 
\label{parameter_values_table}
\end{table*}

\section{Results} \label{results_section}

In this section, we examine the results of the different Hamiltonians and algorithms presented in the previous section. We start by considering a Hamiltonian for the Heisenberg model in the one-dimensional cyclic lattice for 5, 6, and 10 qubits. In addition to the one-dimensional Heisenberg model, we optimized the two-dimensional Heisenberg Hamiltonian in a rectangular lattice with the nearest neighbor connectivity and open boundary conditions for 6 and 15 qubits. The lattice sizes were set to $2 \times 3$ and $3\times 5$, respectively. Then, we show our results for the 4-qubit hydrogen molecular Hamiltonian. Finally, we compare the expressibility between the algorithms in terms of random state optimization. The trace distance between the optimized state and the randomly sampled target state is measured as a function of circuit evaluations. In all experiments, we compare the performance of the gradient-based algorithm, Adam~\cite{adam_opt}, with the gradient-free algorithms \verb|Rotosolve-Haar|, \verb|Rotosolve|, \verb|Fraxis|, \verb|FQS|, as well as hybrid algorithms.

The circuits in all the following results were prepared in the same way, layer by layer. One layer consists of single-qubit gates, where we assign one gate for each qubit, followed by a ladder of controlled-Z gates to strongly entangle the qubits. The structure of the layers was repeated $L$ times to create the circuit. This is similar to the circuit ansatz as done in Ref.~\cite{McClean_2018}. An illustration of the ansatz circuit is shown in Fig.~\ref{circuit1}. This ansatz circuit was used in all experiments. 

For each iteration, the parameters are optimized in order and only once. The optimization is sequential, starting from the top left of the circuit and moving downward in the layer. After all gates in the layer are optimized, we move on to the next layer. This process is iterated until all gates in the circuit are optimized. One trial consists of a total of 50 iterations as a stopping criterion for \verb|Rotosolve|. In terms of circuit evaluations, this corresponds to 25 and 15 iterations of optimization for \verb|Fraxis| and \verb|FQS| algorithms, respectively. The hybrid algorithms were stopped after reaching the same number of circuit evaluations as the algorithms mentioned above. This ensures a fair comparison between algorithms in terms of cost efficiency with respect to circuit evaluations performed. For the Adam algorithm, we utilize the parameter-shift rule~\cite{param_shift_rule} to compute the gradients for the parameters, which translates to 2 circuit evaluations for each gate. To ensure the same number of circuit evaluations compared to other algorithms, we set the number of iterations to 75 for the Adam algorithm. At the beginning of each trial, all parameters and gates are sampled at random with a fixed structure of the circuit. The parameters $\theta_d$ for \verb|Rotosolve| and \verb|Rotosolve-Haar| were sampled from the uniform distribution in the range of $(-\pi, \pi]$, and each unitary $V_d \in U(2)$ is sampled at random from the Haar measure. \verb|FQS| algorithm was initialized by sampling single-qubit gates from the Haar-random distribution of $U(2)$. \verb|Fraxis| algorithms components $(n_x, n_y, n_z)$ can be represented in spherical coordinates $\hat{\bm{n}}(\theta, \phi) = (\cos \theta, \sin \theta \cos \phi, \sin \theta \sin \phi)$. Here, we can create a random axis by sampling it from a uniform spherical distribution. The hybrid algorithms were initialized the same way as \verb|Rotosolve-Haar|. The Adam algorithm was initialized same way as regular \verb|Rotosolve| and the learning rate was set to 0.1 in all experiments. Data were collected by taking the expectation value of the cost function (Hamiltonian) after each gate optimization. In Sec.~\ref{overlapping_section}, a trace distance is computed instead of the expectation value of the Hamiltonian. Unless otherwise stated, all numerical results are obtained using 8192 measurement shots per Hamiltonian term. This provides a realistic estimate of measurement noise compared to near-term quantum devices.

The optimization was performed using the Pennylane package (version 0.37.0)~\cite{bergholm2022pennylaneautomaticdifferentiationhybrid} and Python version 3.11. We provide additional experiments for \verb|Rotosolve|, \verb|Fraxis|, and \verb|FQS|, where the gate optimization time and gate fidelity between the statevector simulator and with shot noise for each algorithm are presented in Appendix~\ref{gate_fidelity_and_time_appendix}.

\begin{figure}
    \centering
    \includegraphics[width=0.99\linewidth]{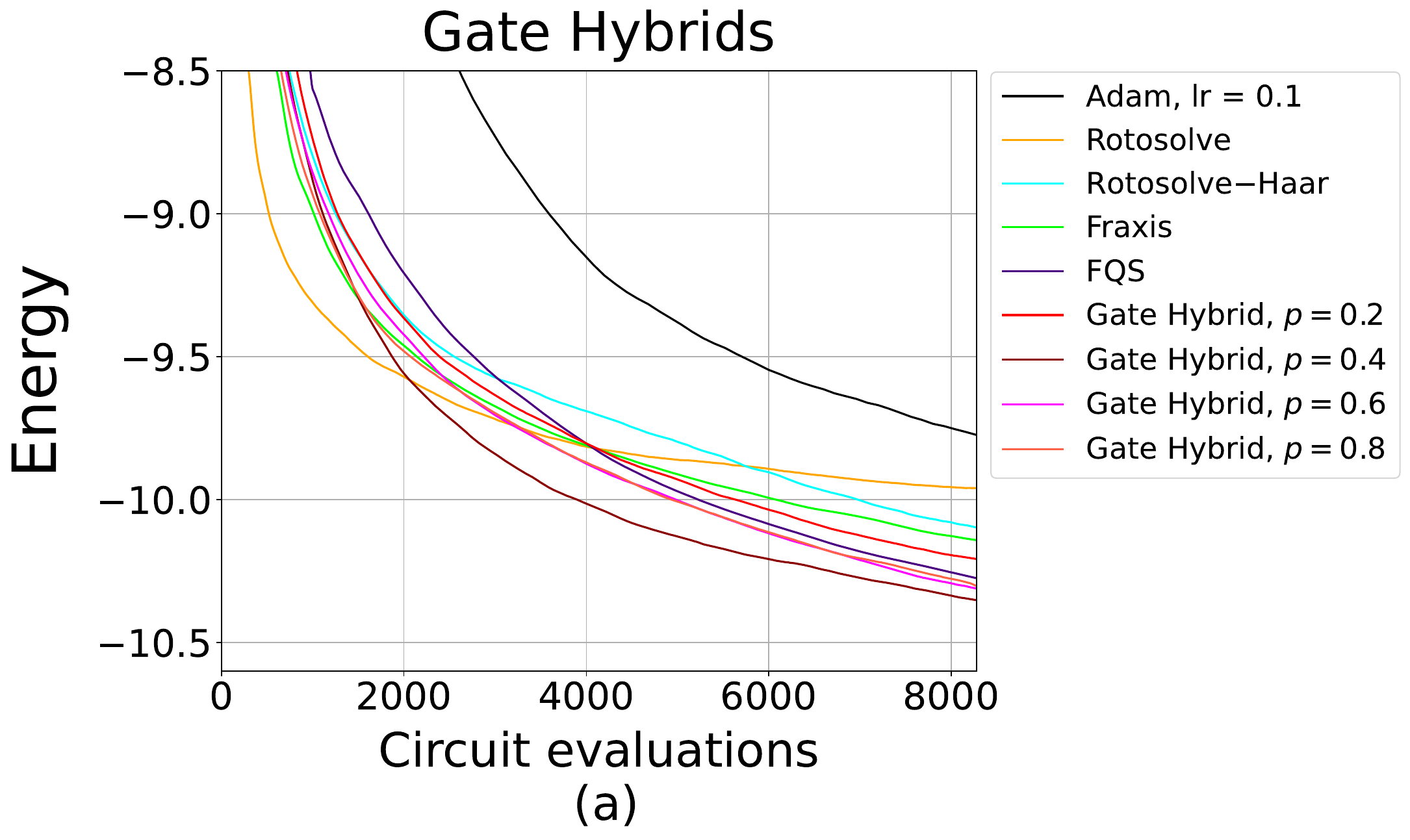}
    \includegraphics[width=0.99\linewidth]{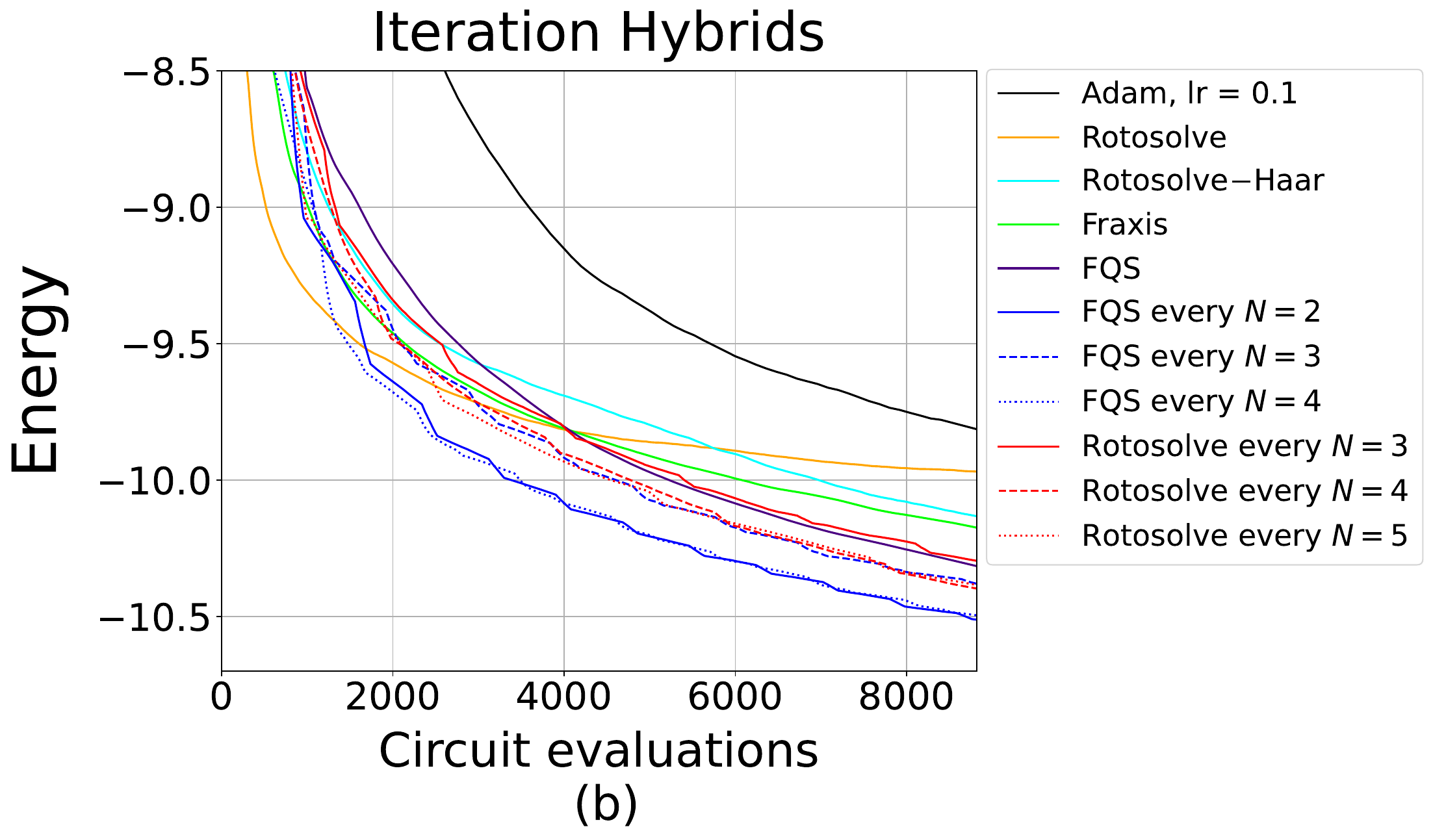}
    \caption{A comparison of algorithm performance on the 6-qubit Heisenberg model Hamiltonian with 10 layers as a function of circuit evaluations for (a) gate-specific hybrids and (b) iteration-specific hybrids. Each line shows the fitted mean of the 20 trials. The ground state of the system is approximately $E_g=-11.2111$.}
    \label{1D_heisenberg_results_6qubits}
\end{figure}

\begin{figure}
    \centering
    \includegraphics[width=0.99\linewidth]{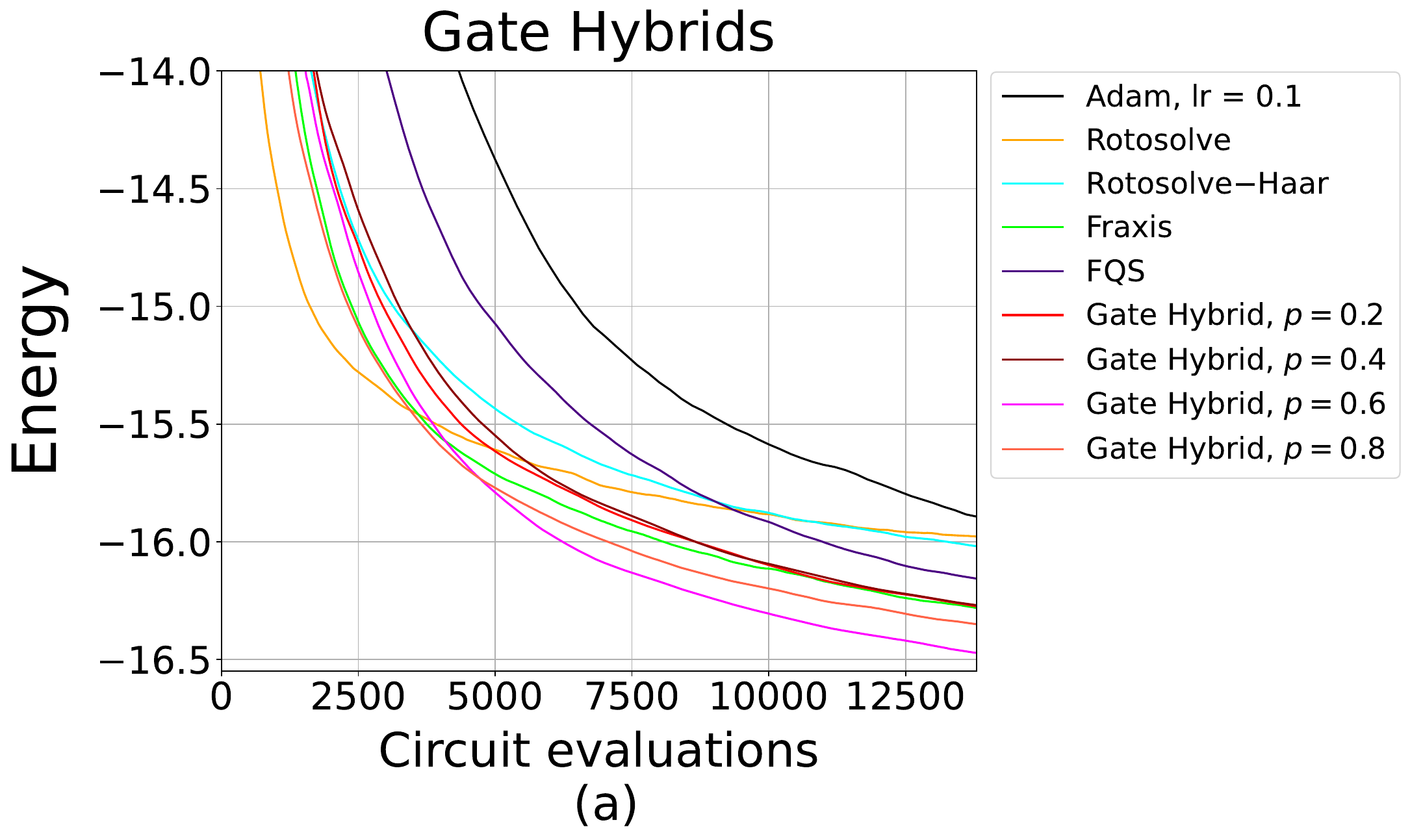}
    \includegraphics[width=0.99\linewidth]{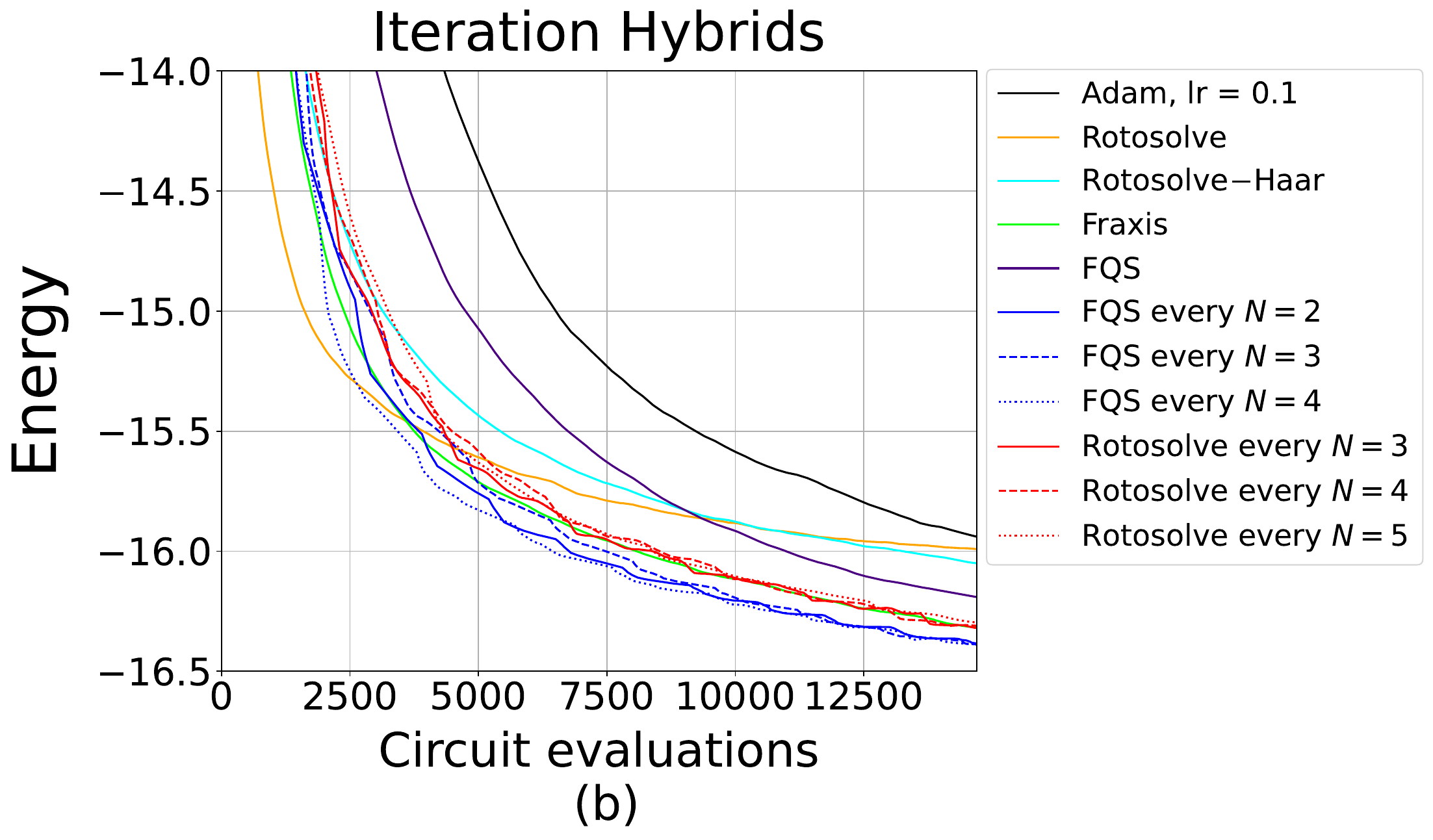}
    \caption{A comparison of algorithm performance on the 10-qubit Heisenberg model Hamiltonian with 10 layers as a function of circuit evaluations for (a) gate-specific hybrids and (b) iteration-specific hybrids. Each line shows the fitted mean of the 20 trials. The ground state of the system is approximately $E_g=-18.3688$.}
    \label{1D_heisenberg_results_10qubits}
\end{figure}

\begin{figure}
    \centering
    \includegraphics[width=0.99\linewidth]{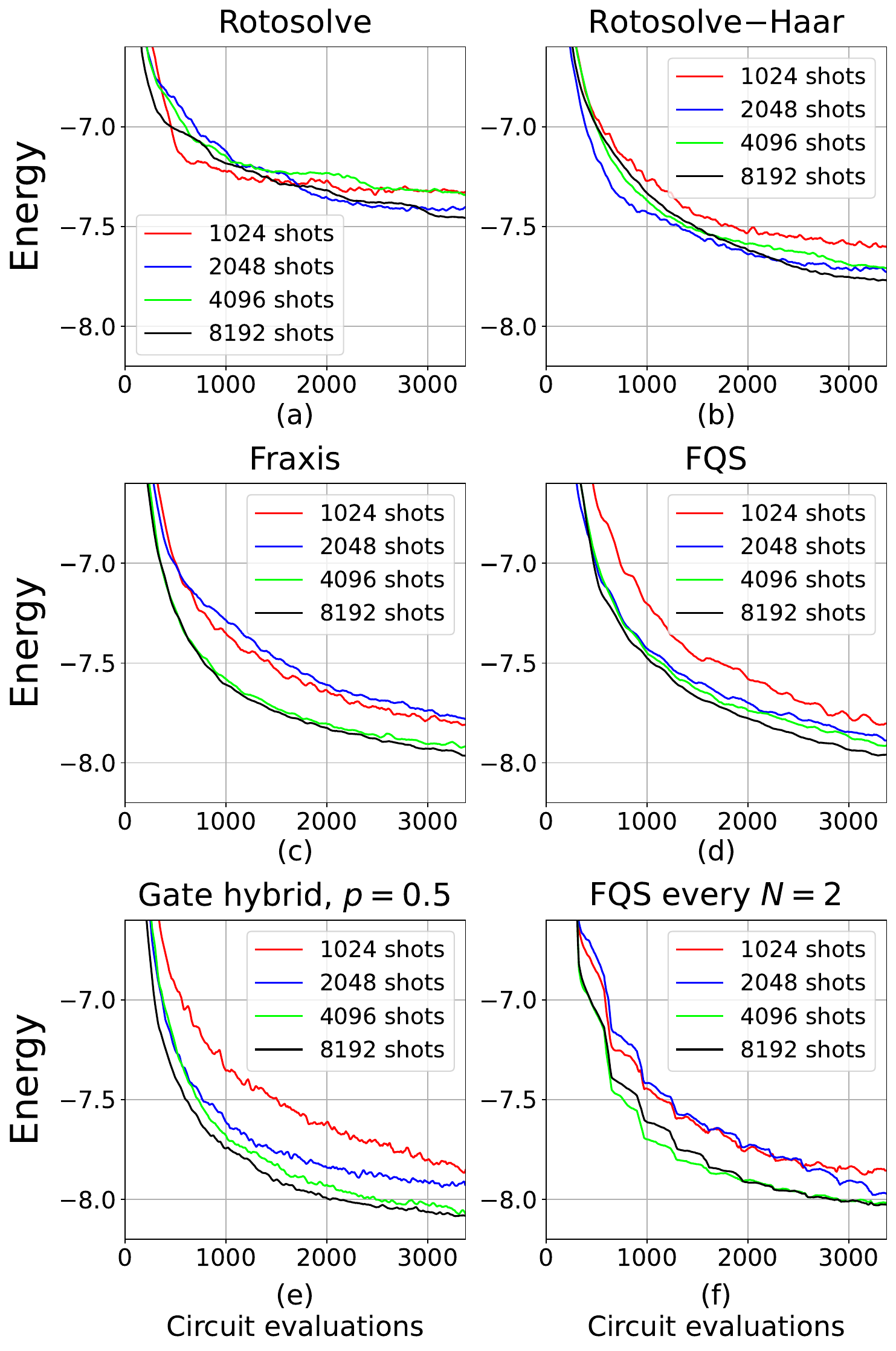}
    \caption{A comparison of algorithm performance on the 5-qubit Heisenberg model Hamiltonian as a function of circuit evaluations with $J = h = 1$. Each line shows the fitted mean of the 20 trials and represents the number of shots used to approximate each Hamiltonian term. The ground state of the system is approximately $E_g=-8.4721$.}
    \label{1D_heisenberg_5q_noisy_comparison}
\end{figure}

\subsection{Heisenberg model} \label{heisenberg_results}
The Hamiltonian of the Heisenberg model is~\cite{Kandala_2017}

\begin{equation}\label{Heisenberg_Hamiltonian}
    H = J \sum_{(i,j) \in \mathcal{E}} \left(X_i X_j + Y_i Y_j + Z_i Z_j \right) + h \sum_{i \in \mathcal{V}} Z_i,
\end{equation}
\
\\
where $\mathcal{E}$ is the edges of the lattice, $\mathcal{V}$ are the vertices, $J$ is the strength of the spin interaction and $h$ the magnetic field strength along the Z-axis. We chose $J = h = 1$ for the values $J$ and $h$ in this work.

\subsubsection{1D Heisenberg model}

The one-dimensional Heisenberg model was optimized by using Adam, \verb|Rotosolve|, \verb|Rotosolve-Haar|, \verb|Fraxis|, \verb|FQS|, and hybrid algorithms for 6 and 10 qubits. The optimization was done with 10 layers for both system sizes. We used 8192 measurements for each Hamiltonian term to approximate the energy, and a total of 20 trials were performed for each algorithm. The number of circuit evaluations is adjusted in each figure to match the true cost of each algorithm. In addition, we examined the impact on performance with different numbers of shots used in optimization. For this task, we considered a 5-qubit Heisenberg Hamiltonian with 5 layers, and the shot counts were set to 1024, 2048, 4096, 8192, respectively.

\begin{figure}
    \centering
    \includegraphics[width=0.99\linewidth]{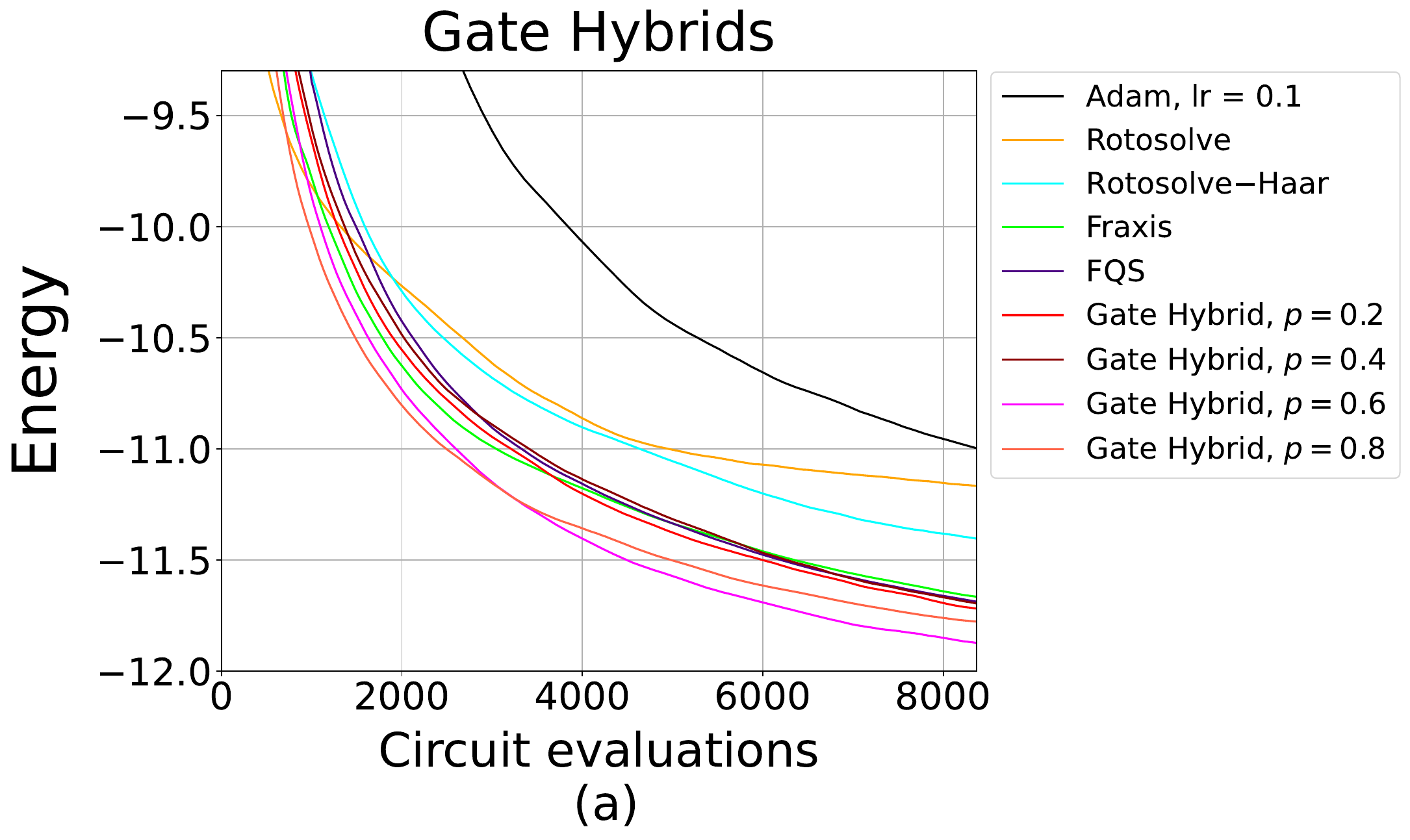}
    \includegraphics[width=0.99\linewidth]{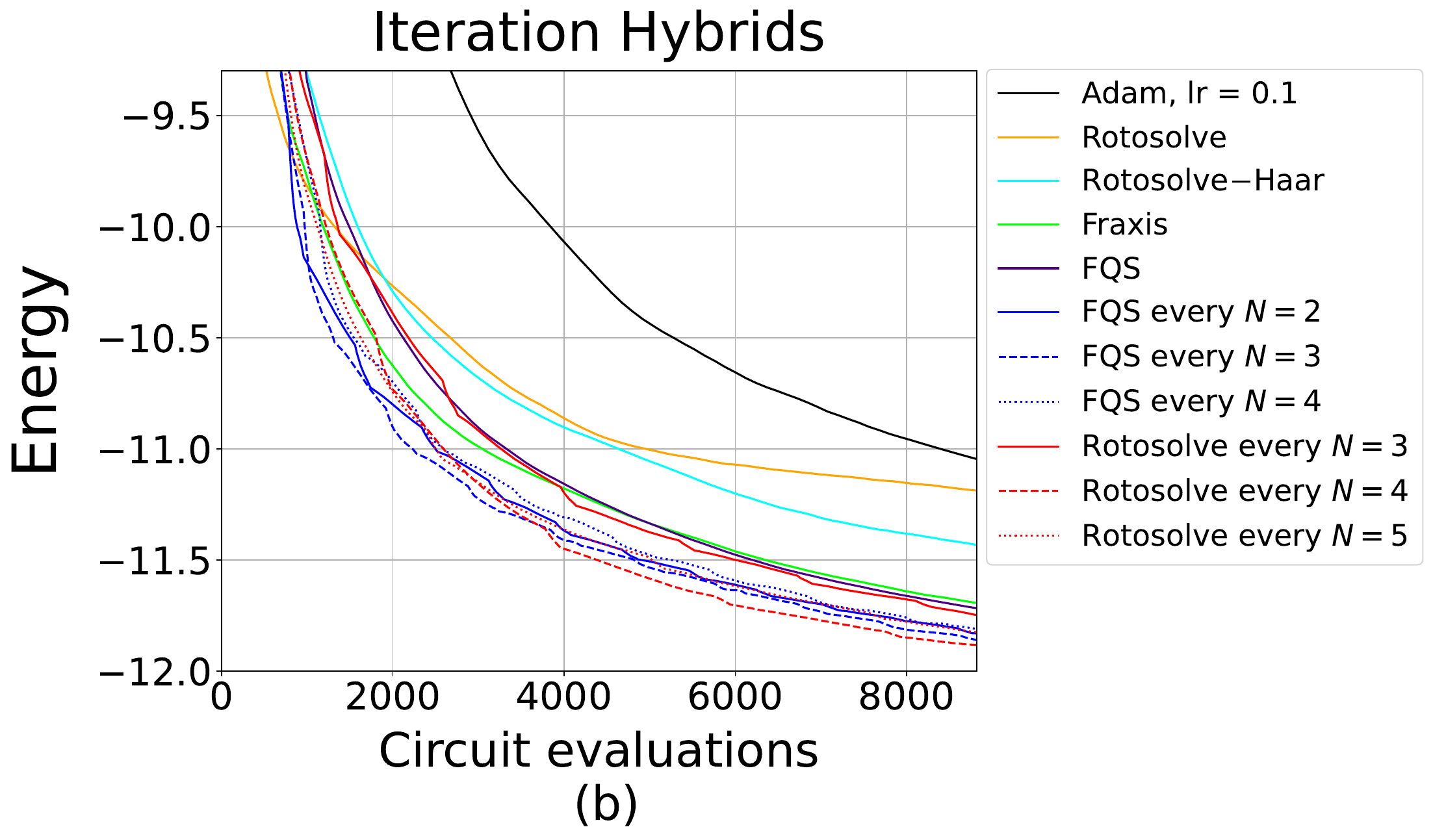}
    \cprotect\caption{A comparison of algorithm performance on the 6-qubit two-dimensional Heisenberg model Hamiltonian with 10 layers as a function of circuit evaluations for (a) gate-specific hybrids and (b) iteration-specific hybrids. Lattice dimension is set to $2\times3$ and each line represents the fitted mean of the 20 trials. The ground state of the system is approximately $E_g=-12.5175$.}
    \label{2D_heisenberg_results_6qubit}
\end{figure}

\begin{figure*}
    \centering
    \includegraphics[width=0.49\linewidth]{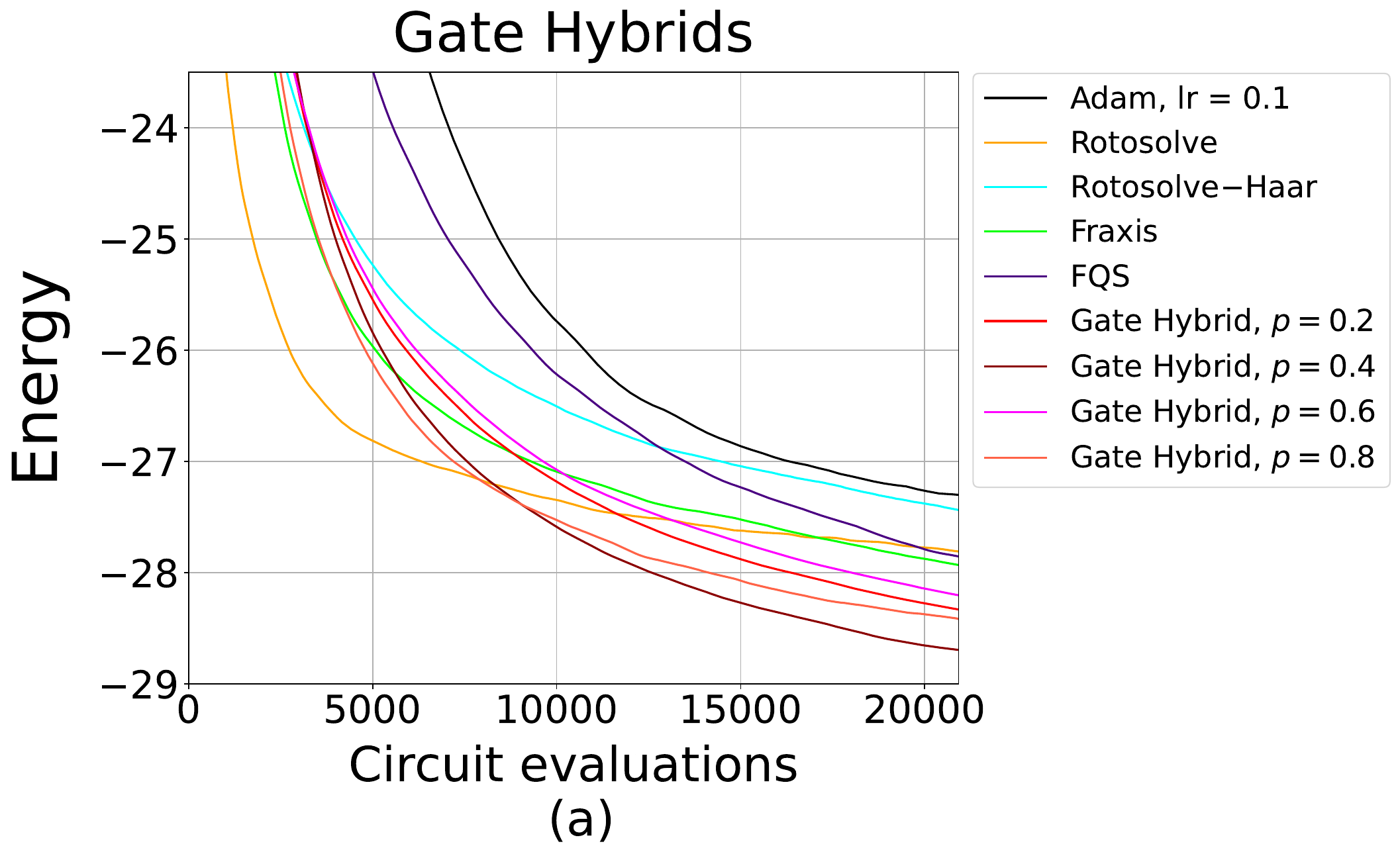}
    \includegraphics[width=0.49\linewidth]{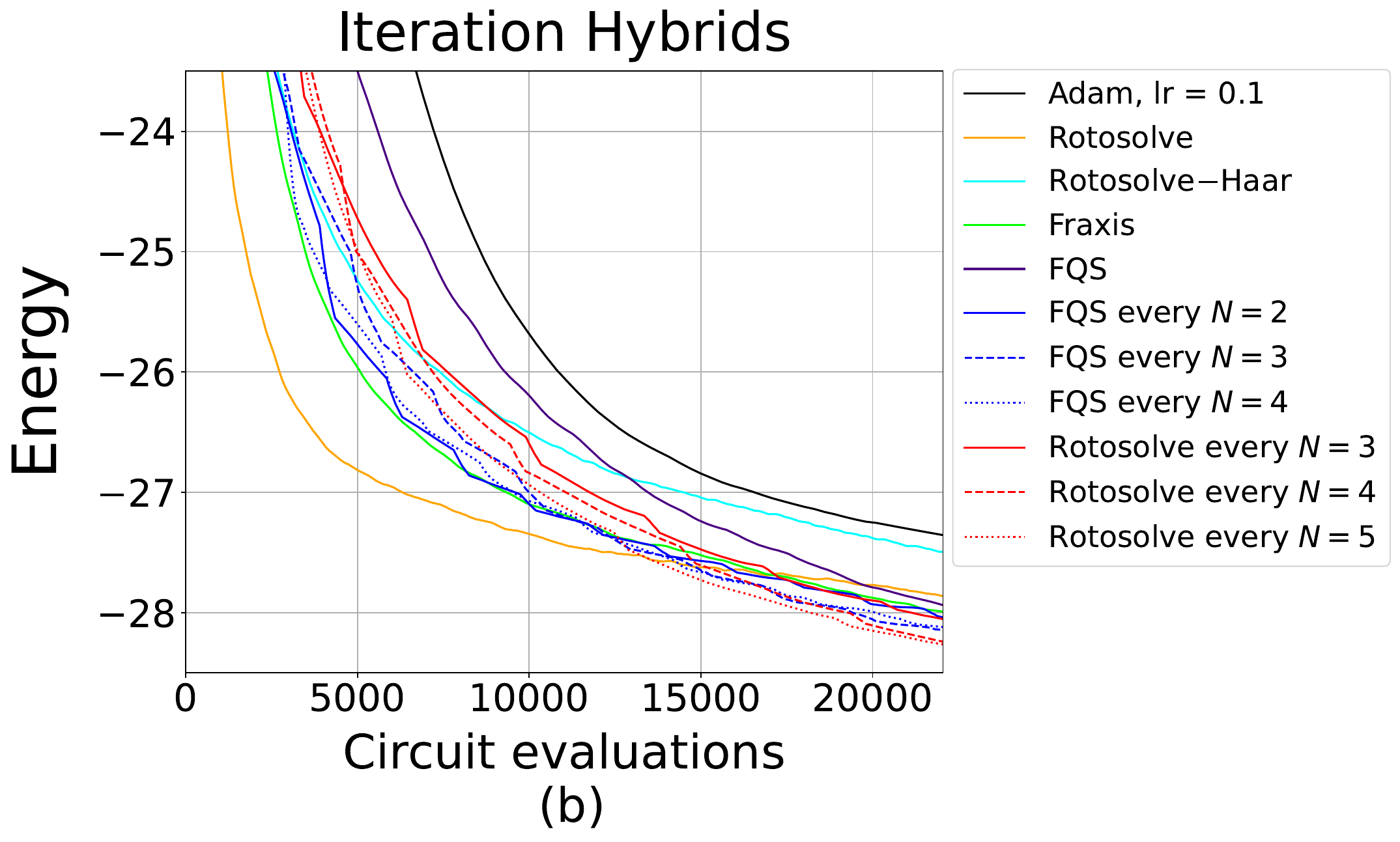}
    \cprotect\caption{A comparison of algorithm performance on the 15-qubit two-dimensional Heisenberg model Hamiltonian with 10 layers as a function of circuit evaluations for (a) gate-specific hybrids and (b) iteration-specific hybrids. Lattice dimension is set to $3\times5$, and each line represents the fitted mean of the 20 trials. The ground state of the system is approximately $E_g=-34.5505$.}
    \label{2D_heisenberg_results_15qubit}
\end{figure*}

In Fig.~\ref{1D_heisenberg_results_6qubits}(a) and Fig.~\ref{1D_heisenberg_results_6qubits}(b), we compare the performance of algorithms for the 6-qubit Heisenberg model using gate- and iteration-specific hybrids, respectively. The results demonstrate the effectiveness of hybrid algorithms compared to the known gradient-based Adam algorithm and gradient-free algorithms \verb|Rotosolve|, \verb|Fraxis|, and \verb|FQS| from Refs.~\cite{Ostaszewski_2021},~\cite{fraxis}, and~\cite{fqs_algo}, respectively, as well as the \verb|Rotosolve-Haar|. For the gate-specific hybrids, the probability $p$ was set to $p=0.2, 0.4, 0.6$, and 0.8. For the iterations-specific hybrids, we used  $N=2,3,4$ for \verb|FQS| every $N$-th iteration and values $N=3,4,5$ for \verb|Rotosolve-Haar| every $N$-th iteration. For the 6-qubit system in Fig.~\ref{1D_heisenberg_results_6qubits}, the iteration-specific hybrids exhibit the strongest performance. Here, all iteration-specific hybrids perform better than standalone \verb|FQS|, \verb|Fraxis|, and \verb|Rotosolve| algorithms, except \verb|Rotosolve-Haar| every $N=3$. The gate-specific hybrids also yield consistent improvements over baseline methods.

We repeated the experiment with the same hyperparameter values $p$ and $N$ for all gate- and iteration-specific hybrids for the 10-qubit system with 10 layers as shown in Fig.~\ref{1D_heisenberg_results_10qubits}. For the gate-specific hybrids in Fig.~\ref{1D_heisenberg_results_10qubits}(a), the best performance is obtained by focusing more on optimizing the gates with \verb|Rotosolve-Haar| than \verb|FQS|. Specifically, for $p>0.5$, the gates are optimized more frequently with \verb|Rotosolve-Haar|. When examining the iteration-specific hybrids in Fig.~\ref{1D_heisenberg_results_10qubits}(b), the \verb|FQS| every $N$-th iteration-specific hybrid achieves the best performance regardless of the hyperparameter value for $N$. In particular, the \verb|Rotosolve-Haar| every $N$-th iteration-specific hybrid exhibits similar convergence compared to \verb|Fraxis| regardless of the value $N$. These results indicate that for the 10-qubit Heisenberg model with 10 layers, it is best to use a hybrid algorithm that favors more \verb|Rotosolve-Haar| than \verb|FQS| in the optimization process.

We also investigated the effect of shot noise on the performance of gradient-free algorithms by varying the number of shots used. The results are presented in Fig.~\ref{1D_heisenberg_5q_noisy_comparison}, where the performance of each algorithm is presented for 1024, 2048, 4096, and 8192 shots, respectively. In this experiment, we considered only $p=0.5$ for the gate-specific hybrids and $N=2$ for the iteration-specific hybrid. In the comparison between \verb|Rotosolve| and \verb|Rotosolve-Haar|, \verb|Rotosolve-Haar| performs significantly better than the standard version of \verb|Rotosolve|. As the number of shots increases, i.e., more accurate measurements, the \verb|Rotosolve-Haar| exhibits consistent improvement, unlike \verb|Rotosolve|. A similar trend is observed for the gate- and iteration-specific hybrids, and they achieve the best performance with 4096 and 8192 shots.

\subsubsection{2D Heisenberg model on a rectangular lattice}

The two-dimensional rectangular Heisenberg lattice model with open boundary conditions is closely related to the one-dimensional case. The Hamiltonian for the 2D lattice is identical to the definition in Eq.~(\ref{Heisenberg_Hamiltonian}), but we have a different graph $G(V, E)$. We set parameters $J=h=1$, such that $J/h = 1$. In contrast to the one-dimensional model, we do not use the cyclic property in this case. We consider the spins to interact with their closest neighbors in a rectangular grid. 

We conducted experiments for systems of 6 and 15 qubits, using lattice dimensions of $2\times3$ and $3\times5$, respectively. In both cases, the number of layers was set to $L=10$, and a total of 20 trials were performed for each algorithm. Again, we used the total number of circuit evaluations as a metric to measure the performance of each algorithm. We considered $N=2,3,4$ for the \verb|FQS| every $N$-th iteration-specific hybrid, $N=3,4,5$ for the \verb|Rotosolve-Haar| every $N$-th iteration-specific hybrid, and $ p=0.2, 0.4, 0.6, 0.8$ for the gate-specific hybrid algorithm.

\begin{figure}
    \centering
    \includegraphics[width=0.99\linewidth]{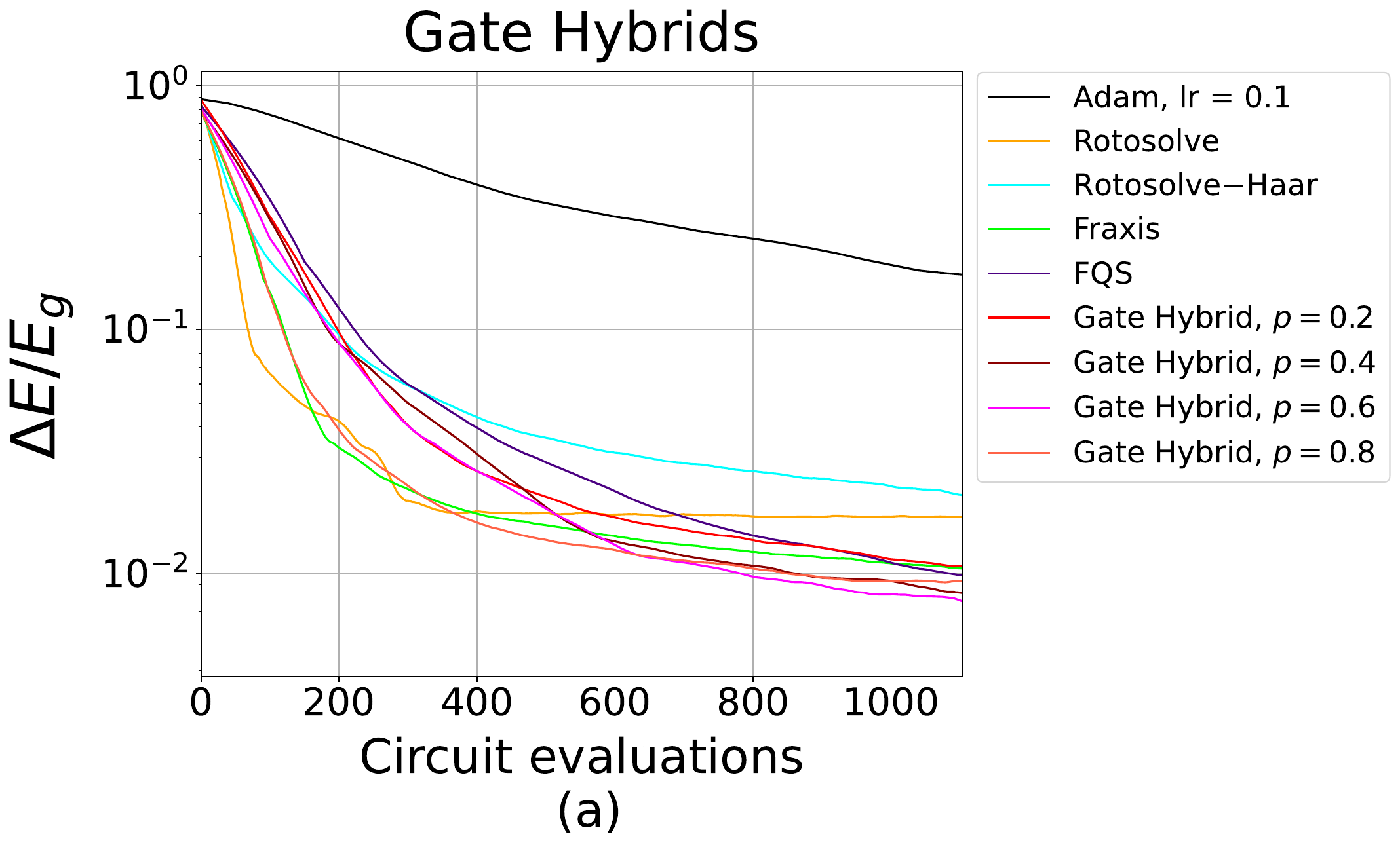}
    \includegraphics[width=0.99\linewidth]{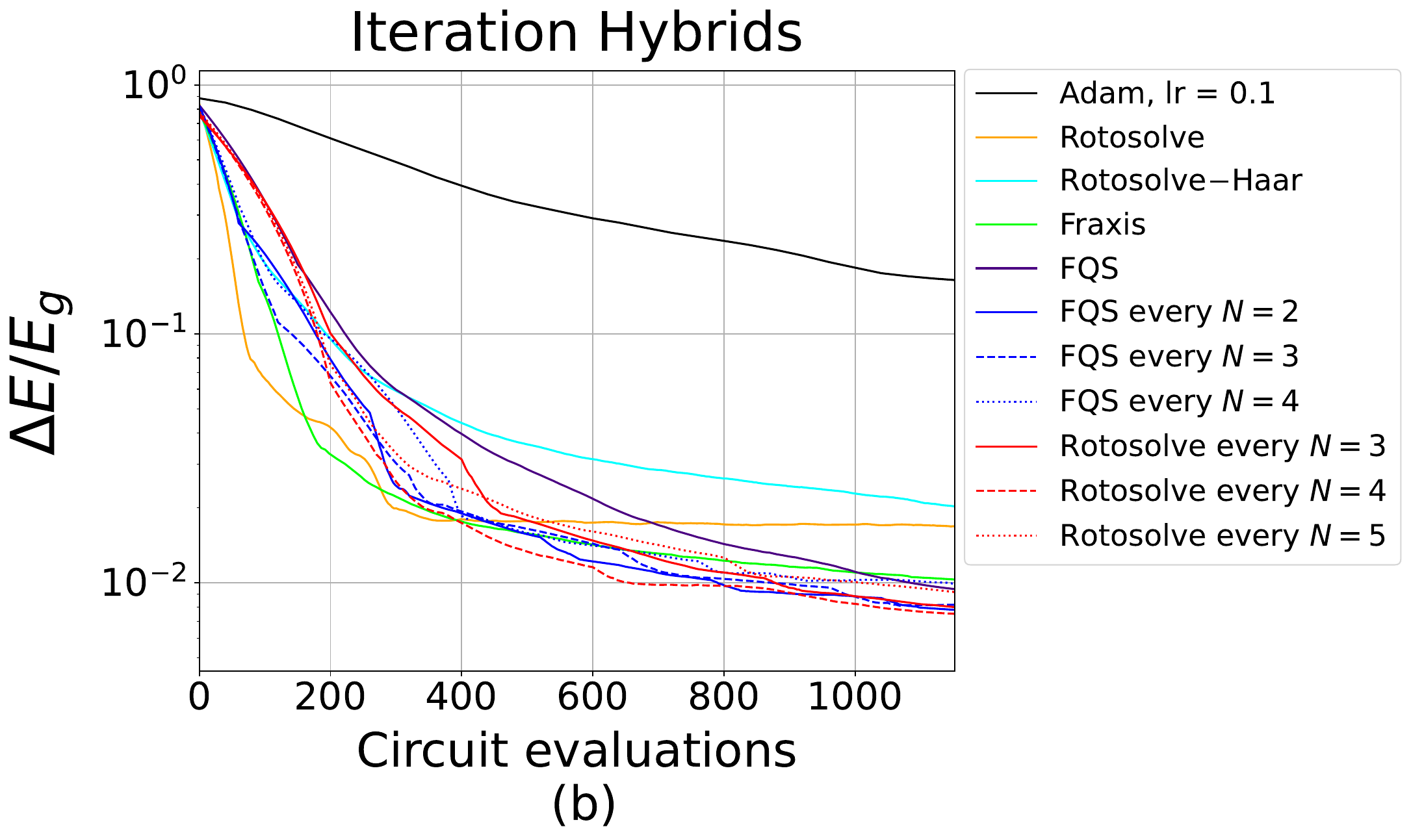}
    \cprotect\caption{A comparison of algorithm performance on the 4-qubit hydrogen molecular Hamiltonian as a function of circuit evaluations for (a) gate-specific hybrids and (b) iteration-specific hybrids. Each line is the mean of 20 trials, and the ground state of the Hamiltonian is approximately $E_g=-1.1373$.}
    \label{H2_results}
\end{figure}

The results for 6- and 15-qubit systems are presented in Fig.~\ref{2D_heisenberg_results_6qubit} and Fig.~\ref{2D_heisenberg_results_15qubit}, respectively. We observe that the gate-specific hybrids perform better for the 15-qubit system than the 6-qubit system. In Fig.~\ref{2D_heisenberg_results_15qubit}(a), regardless of the hyperparameter value $p$, the gate-specific hybrid achieves the best convergence among the standalone algorithms. In Fig.~\ref{2D_heisenberg_results_15qubit}(b), all except \verb|FQS| every $N=2$ and \verb|Rotosolve-Haar| every $N=3$ iteration-specific hybrids outperform the other algorithms. When comparing the relative performance of the iteration-specific hybrids for the 6-qubit system in Fig.~\ref{2D_heisenberg_results_6qubit}(b), all of them have better convergence except \verb|Rotosolve-Haar| every $N=3$ iteration-specific hybrid. Based on these observations, it is good to choose a gate-specific hybrid, while setting the value of $p > 0.5$ is a favorable choice for the circuit depth of $L=\mathcal{O}(n)$. The iteration-specific hybrids appear to be a beneficial choice for smaller system sizes, but with suitable hybrid and hyperparameter selection, they are also a viable option for larger and more complex systems. In addition, \verb|Rotosolve| benefits from the random axis initialization at smaller system sizes as demonstrated for 6-qubit systems for one- and two-dimensional Heisenberg models in Fig.~\ref{1D_heisenberg_results_6qubits} and Fig.~\ref{2D_heisenberg_results_6qubit}, respectively.

\subsection{Hydrogen molecular Hamiltonian} \label{results_molecular_hamiltonians}

Next, we consider the optimization of the molecular hydrogen (H$_2$) Hamiltonian using 4 qubits. The Hamiltonian for H$_2$ consists of 15 Pauli terms, which are listed in the Appendix~\ref{hydrogen_hamiltonian_appendix}. The Hamiltonian was extracted from the Pennylane Quantum Datasets for the H$_2$ molecule~\cite{Utkarsh2023Chemistry_H2}, and the bond length was set to 0.742 Å, and the STO-3G basis set~\cite{sto_3g_basis} was used. As in the previous experiments,  we examine the convergence of the energy expectation value as a function of the number of circuit evaluations.

A total of 20 trials were performed, and 8192 measurements for each Hamiltonian term were used to estimate the energy. We used the same layer design as in Fig.~\ref{circuit1} but for 4 qubits, and the number of layers was set to $L=5$. Data were collected in each gate optimization step, and the energy expectation value was measured. 

The results are shown in Fig.~\ref{H2_results}. Here, we examine the relative error with respect to the ground state $E_g$ of the H$_2$ Hamiltonian as a function of circuit evaluations on a semi-log scale. We denote the relative error to the ground state as $\Delta E/ E_g$, where $\Delta E = E_g - \expval{M}$. In Fig.~\ref{H2_results}(a), the gate-specific hybrids with $p=0.4$ and $p=0.6$ perform the best, indicating that choosing $p$ close to 0.5 effectively leverages the fast convergence and cost effectiveness of \verb|Rotosolve-Haar|. For the iteration-specific hybrids in Fig.~\ref{H2_results}(b), the best convergence towards the ground state is obtained with a low value for the hyperparameter $N$ for both iteration-specific hybrids. Notably, the convergence of the best iteration-specific hybrids shows no significant change from each other, but they still perform better than standalone algorithms.

\subsection{Maximizing overlap between the quantum states} \label{overlapping_section}

Maximizing the overlap between quantum states is a fundamental task in quantum computing. The overlap or closeness between quantum states can be quantified by various measures, with two of the most common being the trace distance and fidelity. These measures provide different but related ways to assess how similar or "close" two quantum states are to each other. Quantifying the overlap between quantum states is crucial in various applications, from quantum state preparation to quantum error correction. For two pure states $|\phi\rangle,|\psi\rangle$ we can express the fidelity as 
\begin{equation}
    F(|\phi\rangle,|\psi\rangle) = |\langle\phi|\psi\rangle|^2,
\end{equation}
while the trace distance is given by 
\begin{equation}
    T(|\phi\rangle,|\psi\rangle) = \sqrt{ 1 - F(|\phi\rangle,|\psi\rangle)}.
\end{equation}
In this work, we use the trace distance as a metric to measure the proximity of quantum states. There are numerous applications for maximizing the overlap between quantum states, namely in quantum state tomography~\cite{quantum_state_tomography}, quantum state preparation~\cite{quantum_state_prep}, and quantum error correction~\cite{gottesman2009introductionquantumerrorcorrection}. 

\begin{figure}
    \centering
    \includegraphics[width=0.99\linewidth]{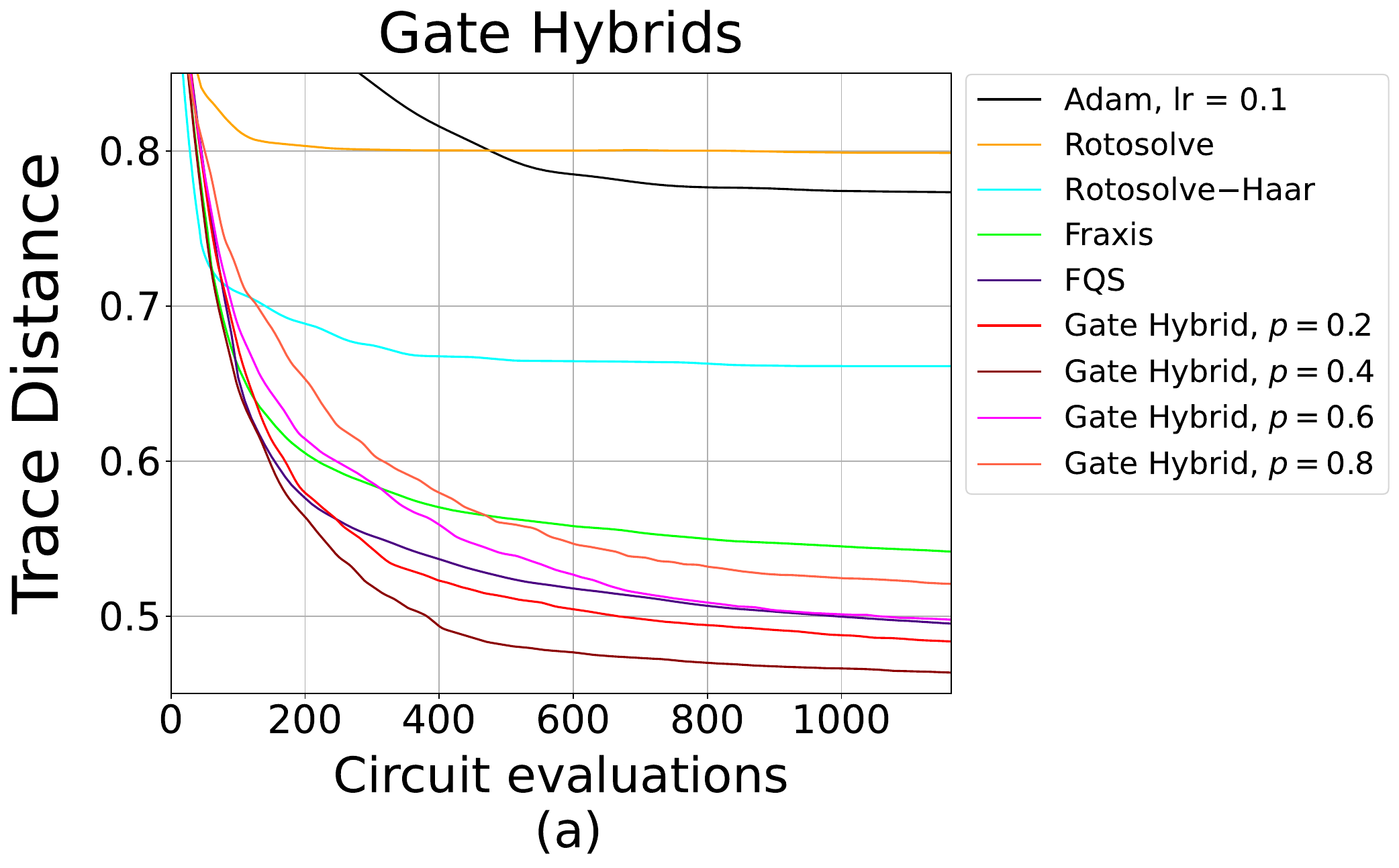}
    \includegraphics[width=0.99\linewidth]{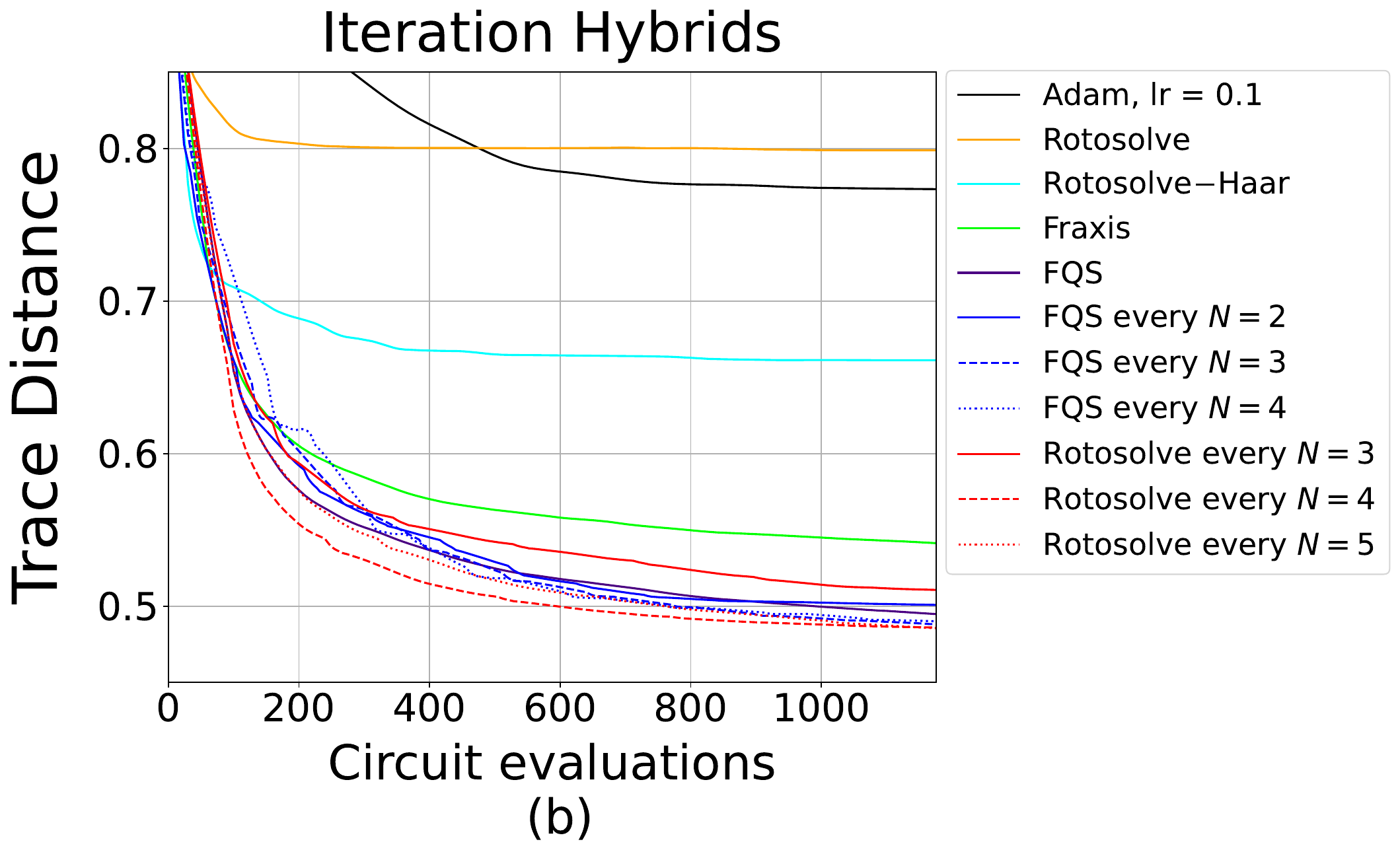}
    \cprotect\caption{Maximizing overlapping between a random quantum state $|\phi \rangle$ and VQC by using trace distance for algorithms Adam, \verb|Rotosolve|, \verb|Rotosolve-Haar|, \verb|Fraxis|, \verb|FQS| and (a) gate-specific hybrid as well as (b) iteration-specific hybrids. Each plot shows the trace distance as a function of circuit evaluations. In each trial, a new 4-qubit random state was generated. Each line indicates the mean of 30 trials, and the lower value is better.}
    \label{overlapping_4q}
\end{figure}

\begin{figure}
    \centering
    \includegraphics[width=0.99\linewidth]{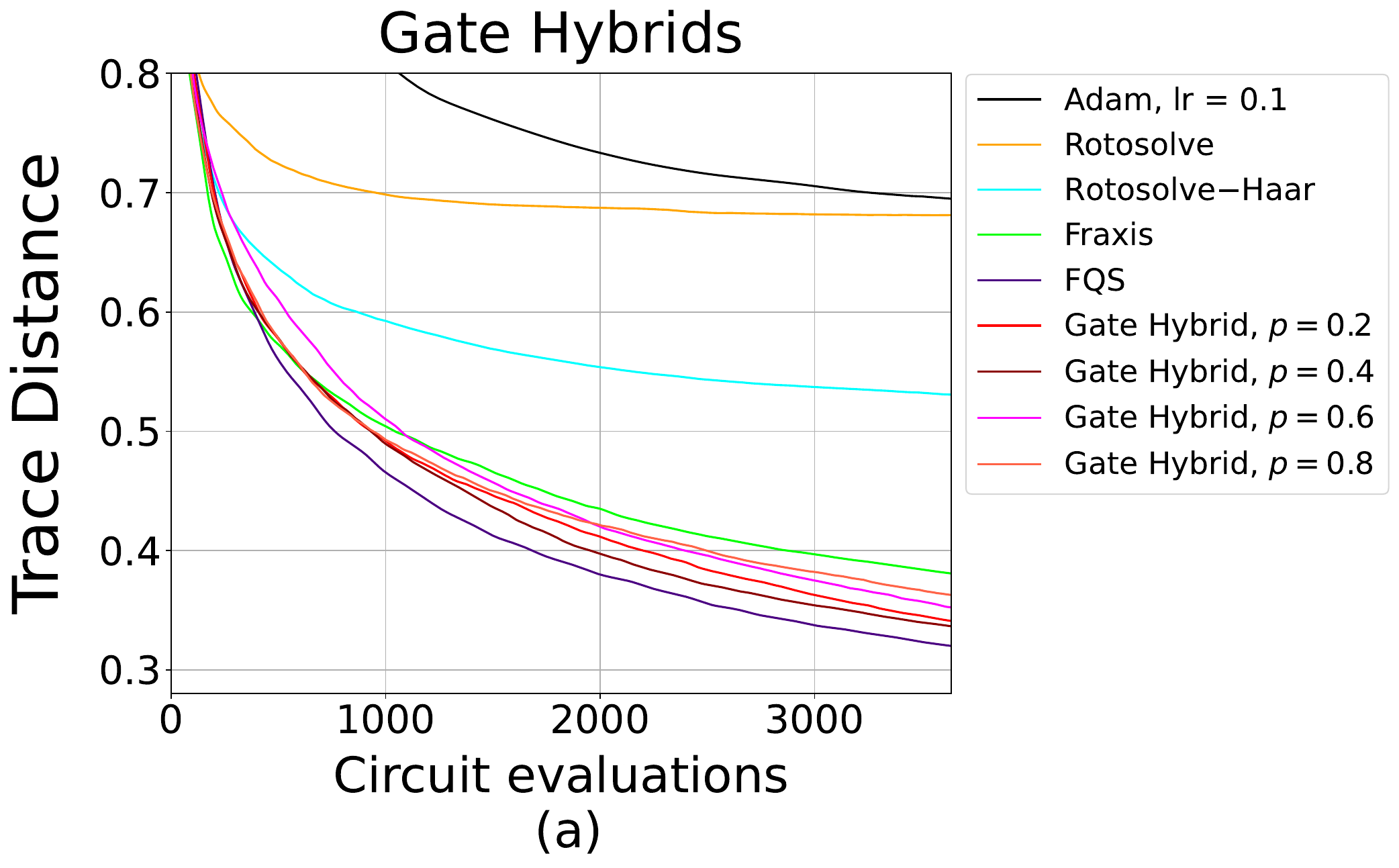}
    \includegraphics[width=0.99\linewidth]{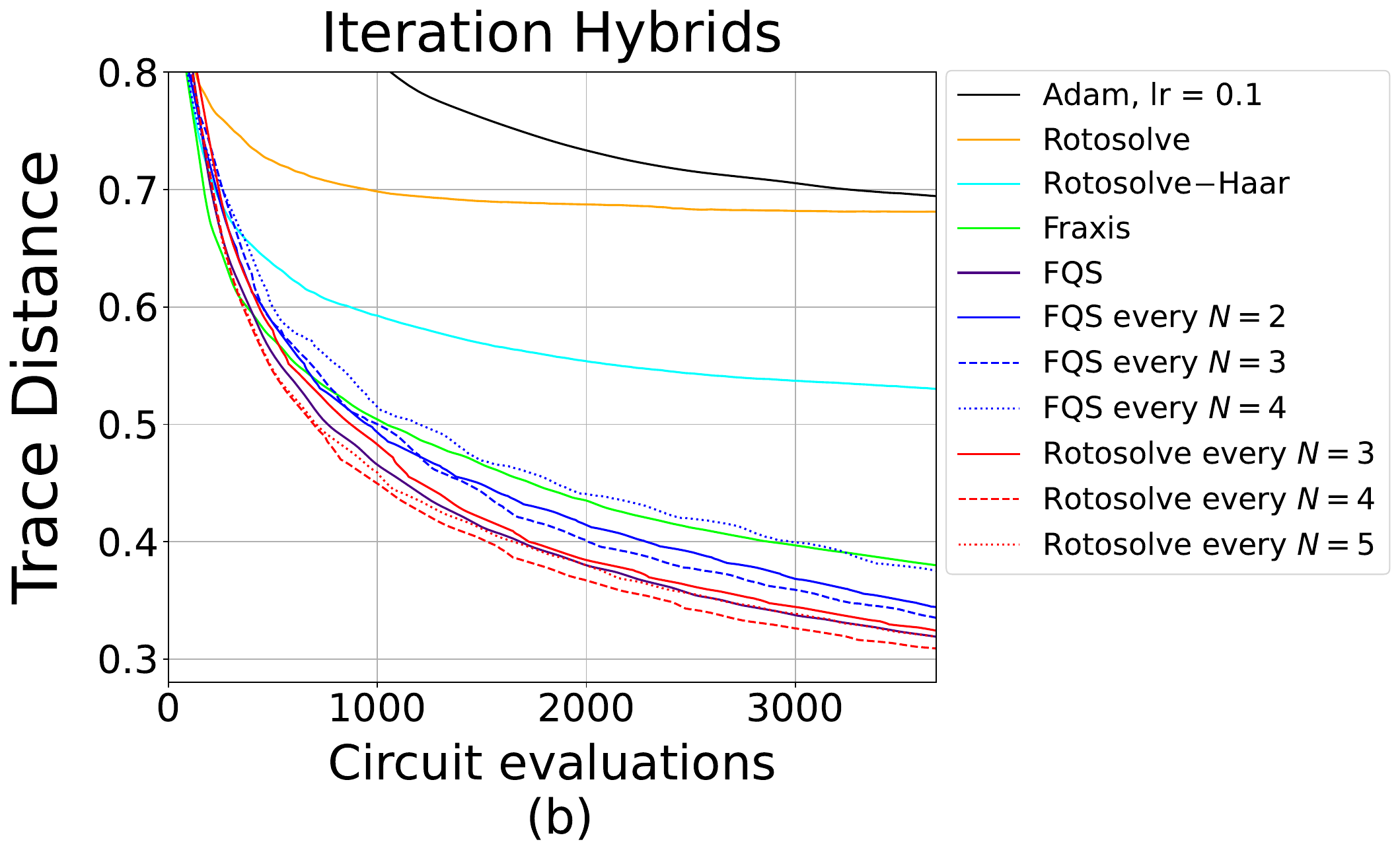}
    \cprotect\caption{Maximizing overlapping between a random quantum state $|\phi \rangle$ and VQC by using trace distance for Adam, \verb|Rotosolve|, \verb|Rotosolve-Haar|, \verb|Fraxis|, and \verb|FQS| algorithms as well as (a) gate-specific hybrid and (b) iteration-specific hybrids. Each plot shows the trace distance as a function of circuit evaluations. In each trial, a new 5-qubit random state was generated. Each line indicates the mean of 30 trials, and the lower value is better.}
    \label{overlapping_5q}
\end{figure}

We consider Adam, \verb|Rotosolve|, \verb|Rotosolve-Haar|, \verb|Fraxis|, \verb|FQS|, and hybrid algorithms to maximize the overlap between a random quantum state $| \phi \rangle$ and the state produced by VQC as closely as possible. To maximize the overlap, we define a projection operator $M = - | \phi\rangle \langle \phi|$, which we use as the Hermitian observable in our cost function. Denoting the output state from the parametrized circuit as $|\psi_{\boldsymbol{\theta}}\rangle$, the cost function to be minimized can then be expressed in terms of the fidelity or the trace distance as 
\begin{equation}
    \langle M \rangle_{\boldsymbol{\theta}} = -F(|\phi\rangle, |\psi_{\boldsymbol{\theta}}\rangle) = T(|\phi\rangle, |\psi_{\boldsymbol{\theta}}\rangle)^2 - 1.
\end{equation}
This optimizes the VQC to a random quantum state $|\phi \rangle$. We use the same ansatz circuit as in Fig.~\ref{circuit1} for 4 qubits with 2 layers, and 5  qubits with 5 layers, respectively. For each combination of qubits and the number of layers in the circuit, we conducted a total of 30 independent trials. We used 50 iterations of optimization for \verb|Rotosolve| and adjusted the iterations for other algorithms based on the total number of circuit evaluations. This translates to 75 iterations for Adam, 25 iterations for \verb|Fraxis|, and 15 iterations for \verb|FQS|, respectively. The hybrid algorithms were stopped after reaching the maximum number of circuit evaluations. In each trial a new random quantum state $|\phi\rangle$, which was generated as follows: First, we draw $2^N$ random complex numbers $z_k = a_k + b_k i$ from the standard normal distribution $\mathcal{N}(0,1)$ to create a vector $|\psi \rangle = (z_1,\dots,z_{2^N})$. We sample real and imaginary parts $a_k$ and $b_k$ separately. After that, we normalize the vector to have unit length $|\phi\rangle = |\psi\rangle / || |\psi\rangle ||^2$. Finally, we create the Hermitian operator $M = - |\phi \rangle \langle \phi|$, which we use in optimization as the objective function.

The results of the random state optimization for each algorithm are shown in Fig.~\ref{overlapping_4q} and Fig.~\ref{overlapping_5q} for the 4- and 5-qubit systems, respectively. Overall, the results favor the algorithms with higher expressivity, reflecting the hardness of random quantum state preparation. In this case, it is better to choose the algorithm with the most expressivity, i.e. \verb|FQS| algorithm. For the 4-qubit system with 2 layers shown in Fig.~\ref{overlapping_4q}(a), the gate-specific hybrid approach with $p=0.2$ and $p=0.4$ slightly outperforms the \verb|FQS| algorithm, despite being the most expressive standalone algorithm. The iteration-specific hybrids in Fig.~\ref{overlapping_4q}(b) similarly obtain slightly better convergence than \verb|FQS|. To be precise, \verb|Rotosolve-Haar| every $N$ and \verb|FQS| every $N$ with values $N=4,5$ and $N=3,4$ exhibit a comparable behavior for convergence, both slightly surpassing \verb|FQS|. 

However, for the 5-qubit system with 5 layers in Fig.~\ref{overlapping_5q}, the convergence of each gate-specific hybrid regardless of the hyperparameter $p$ lies between \verb|FQS| and \verb|Fraxis|. The best convergence among the gate-specific hybrids is obtained for smaller values of $p$. From the iteration-specific hybrids, only \verb|Rotosolve-Haar| every $N=4$ has marginally better convergence than \verb|FQS|, and other iteration-specific hybrids lie between \verb|FQS| and \verb|Fraxis|. We note that the iteration-specific hybrid that uses \verb|Rotosolve-Haar| every $N$-th iteration performs better in this task than the iteration-specific hybrid that uses \verb|FQS| every $N$-th iteration. This behavior arises from the complex nature of the cost function for which the algorithm with the highest expressivity is better suited. In addition, for both 4- and 5-qubit systems, \verb|Rotosolve-Haar| has significantly better performance than regular \verb|Rotosolve|, highlighting the benefit of the random axis initialization when applying it to complex cost functions.

\section{Conclusions} \label{section_conclusion}

In several cases, we observe that hybrid algorithms consisting of \verb|Rotosolve-Haar| and \verb|FQS| improve the convergence towards the ground state of the Hamiltonian as a function of circuit evaluations. First, a gate-specific hybrid algorithm was proposed, where with a fixed probability $p$ we optimize the current gate with \verb|Rotosolve-Haar| and otherwise \verb|FQS| is used. Next, an iteration-specific hybrid algorithm was proposed. Here, the hybrid algorithm is composed of two optimization algorithms $\mathcal{A}$ and $\mathcal{B}$, e.g., \verb|Rotosolve-Haar| and \verb|FQS|, respectively. The algorithm $\mathcal{A}$ is used with every $N$-th iteration, and otherwise, the algorithm $\mathcal{B}$ is used. In this work, we constructed two iteration-specific hybrids, using \verb|Rotosolve-Haar| and \verb|FQS| for the algorithms $\mathcal{A}$ and $\mathcal{B}$, respectively, and vice versa. 

Then, we examined how the performance of the proposed hybrid algorithms compared to other VQAs in optimization. We found that for the 6- and 10-qubit one-dimensional Heisenberg models, the best choice is a hybrid algorithm, iteration- or gate-specific. We also found that for the 5-qubit system, the iteration- and gate-specific hybrids performed the best when increasing the shot count used to estimate the Hamiltonian terms. In addition, \verb|Rotosolve-Haar| outperformed regular \verb|Rotosolve| in this task for all shot counts used. Then, we experimented on the two-dimensional Heisenberg Hamiltonian on a rectangular lattice with 6 and 15 qubits, reaching the same conclusion as in the one-dimensional Heisenberg model. Optimizing the 4-qubit molecular hydrogen (H$_2$) Hamiltonian also showed good results for the hybrid algorithms, but the random state optimization, on the other hand, yielded more nuanced results. When we used 4 qubits with 2 layers, the results were better with certain hyperparameters $p$ and $N$, whereas with 5 qubits and 5 layers, only \verb|Rotosolve-Haar| every $N=4$ iteration-specific hybrid achieved better convergence than \verb|FQS|. The gate-specific hybrids, on the other hand, outperform \verb|Fraxis| but do not surpass \verb|FQS| in this case. A consistent performance hierarchy was observed among the algorithms in the random-state optimization task, with \verb|Rotosolve| or Adam as the worst, followed by \verb|Rotosolve-Haar| and \verb|Fraxis|, respectively. After that, the next best were the iteration-specific or gate-specific hybrids, depending on the hyperparameters $p$ and $N$, and finally the \verb|FQS|. We note that the iteration-specific hybrids exhibit strong overall performance in random state optimization.

When evaluating iteration-specific hybrids with different values of $N$, we found that the \verb|FQS| every $N$-th iteration-specific hybrid achieves the best performance at $N=2$, when simpler Hamiltonians are considered, such as the one-dimensional Heisenberg model. For larger or complex cost functions, including maximizing the overlap between the quantum state, one should focus on using the iteration-specific hybrid that uses \verb|Rotosolve-Haar| every $N$-th iteration. This iteration-specific hybrid uses more \verb|FQS| in the optimization process than the \verb|Rotosolve-Haar| algorithm. Across our experiments, we identified that $N=4$ is the most effective for more complex cost functions.

The hybrid algorithms proposed in this work are a viable alternative for VQC optimization alongside highly expressive \verb|FQS|. The advantages of the hybrid algorithms arise from a suitable trade-off between the expressivity and the optimizability of the circuit ansatz. The hybrids are well-suited for optimizing different spin Hamiltonians with various system sizes. When dealing with a smaller number of qubits or shallow VQCs, for the gate-specific hybrid, a smaller probability value $p<0.5$ is generally more preferable. Similarly, when the system is completely random, or we have an unknown Hermitian observable, as in Sec.~\ref{overlapping_section}, the use of a small value of $p$ remains the most effective choice for the gate-specific hybrid approach. For the iteration-specific hybrids, setting $N=2$ yields the most consistent performance when optimizing spin Hamiltonians up to 10 qubits. For larger or more complex systems, the \verb|Rotosolve-Haar| every $N$-th iteration-specific hybrid benefits setting $N>2$. Moreover, the random axis initialization for \verb|Rotosolve|, \verb|Rotosolve-Haar| is well suited for the optimization tasks with more complex cost functions, such as a random target quantum state. \verb|Rotosolve-Haar| is also a good option when optimizing smaller spin Hamiltonians like 5- or 6-qubit one-dimensional Heisenberg models.

We leave open the question of how this random initialization method, as well as the hybrid algorithm methods, could be utilized on $n$-qubit gates, where $n \geq 2$. More sophisticated combinations of algorithms and hybrid strategies may be explored further than presented in this work, which we plan to investigate in future studies.

\begin{acknowledgments}
We acknowledge funding by Business Finland for project 8726/31/2022 CICAQU. J.P. received funding from the InstituteQ doctoral school. 
\end{acknowledgments}

\appendix

\section{Coefficients for the Pauli terms of the molecular Hamiltonians} \label{hydrogen_hamiltonian_appendix}

% the \\ insures the section title is centered below the phrase: AppendixA
Here, we present the Hamiltonian terms constituting the molecular Hamiltonian for the Hydrogen molecule H$_2$. Hamiltonian terms are expressed in a tensor product basis where terms that commute with each other form a group, i.e., IZIZ and ZIZI, IIIZ are grouped since they commute. This eases the readability of the Hamiltonian terms. We denote I, X, Y, Z as the identity operator and Pauli matrices $\sigma_x, \sigma_y$ and $\sigma_z$ respectively. In total, there were 15 Pauli terms for the Hydrogen molecule H$_2$ with 4 qubits. All Pauli terms and their corresponding coefficients are represented in Table~\ref{h2_table}.

\begin{table}[h] 
\centering
\scalebox{0.85}{%
\begin{tabular}{ |c|c|c| } 
 \multicolumn{3}{c}{} \\
 \hline
 IIII  & YXXY & XXYY \\ 
 $-0.09963387941370971$ & 0.04533062254573469 & $-0.04533062254573469$  \\ 
 ZIII &  XYYX & YYXX \\ 
 0.17110545123720233 & 0.04533062254573469  & $-0.04533062254573469$ \\
 IZII & &  \\ 
 0.17110545123720233 &  & \\ 
 ZZII &  & \\ 
 0.16859349595532533  & &  \\
 IIZI &  &  \\
 $-0.22250914236600539$ &  & \\
 ZIZI &  & \\ 
 0.12051027989546245 & &  \\ 
 IIIZ &  & \\ 
 $-0.22250914236600539$  & &  \\ 
 ZIIZ &  & \\ 
 0.16584090244119712 & &   \\ 
 IZZI & &  \\ 
 0.16584090244119712  &  & \\ 
 IZIZ &  & \\ 
 0.12051027989546245   &  & \\ 
 IIZZ &  & \\ 
 0.1743207725924201 &  & \\

 \hline
\end{tabular}}
\caption{Pauli terms and their coefficients for H$_2$ molecular Hamiltonian at bond distance 0.742 Å.}
\label{h2_table}
\end{table}
    
\begin{comment}

    \begin{table*}
    \centering
    \begin{tabular}{ |c|c|c|c| } 
     \multicolumn{4}{c}{} \\
     \hline
     IIII  & YIYI & YZYI & XIXI\\ 
     -1.5458724243288817 & 0.04324479694907271 &   0.05258434874916604 & 0.04324479694907271  \\ 
     ZIII & IYIY & IYZY & IXIX \\ 
     0.7582840346680491  &  -0.009339551914671998 & 0.05258434874916603 & -0.009339551914671998 \\ 
     IZII & YZYZ & ZYZY & XZXZ\\ 
     0.7582840346680491 &  -0.009339551914671998 & 0.04324479694907271& -0.009339551914671998 \\ 
     ZZII & & &  \\ 
     0.23577187542980704  &  & & \\
     IIZI & &   &\\
     0.191995644071161 & &  & \\
     ZIZI & & &  \\ 
     0.1286780507371525 & & &  \\ 
     IIIZ & & &  \\ 
     0.191995644071161  & & &  \\ 
     ZIIZ &  & &  \\ 
     0.16501398250315585 & & &  \\ 
     IZZI &  & & \\ 
     0.16501398250315585  & & &  \\ 
     IZIZ &  & &  \\ 
     0.1286780507371525   &  & & \\ 
     IIZZ & & &  \\ 
     0.1881254411246361 &  & & \\  \hline
     & & & \\
    IXZX & YXXY & XXYY & \\
    0.05258434874916603 & 0.03633593176600333 & -0.03633593176600333 & \\
    XZXI  &  XYYX & YYXX & \\
    0.05258434874916604 & 0.03633593176600333& -0.03633593176600333 & \\
    ZXZX  &  & & \\
    0.04324479694907271 &  &  & \\ \hline
     
    \end{tabular}
    \caption{Pauli terms and their coefficients for HeH$^+$ at bond distance 0.775 Å}
    \label{heh_plus_table}
    \end{table*}
\end{comment}

\section{Proof for unitary representation} \label{appendix_rotosolve_haar_proof}

Here we provide a proof for solving the equation $U = \exp(-i \frac{\theta}{2} V H V^\dagger)$ w.r.t. angle $\theta$ and unitary $V$.

\begin{proposition}
    Let $H\in \mathbb{C} ^{2\times 2}$ be a Hermitian matrix with eigenvalues $-1$ and $1$, and let $U\in SU(2)$.
    \begin{enumerate}
        \item There exists a unitary matrix $V\in U(2)$ and a number $\theta\in \mathbb{R}$ such that
        \begin{equation}
            \label{eq:V-theta-U-equality}
            e^{-i\frac{\theta}{2} VHV^\dagger} = U.             
        \end{equation}
        \item Suppose that $U\neq\pm I$. If $(V_j,\theta_j)\in U(2)\times \mathbb{R}$, $j=1,2$, are such that
        \begin{equation}
            \label{eq:two-thetas-Vs}
            e^{-i\frac{\theta_1}{2} V_1HV_1^\dagger} = U = e^{-i\frac{\theta_2}{2} V_2HV_2^\dagger},        
        \end{equation}
        then
        \begin{equation}
            \label{eq:uniqueness-of-orbit}
            \{e^{-i\frac{\theta}{2} V_1HV_1^\dagger} : \theta\in \mathbb{R} \}
            = \{e^{-i\frac{\theta}{2} V_2HV_2^\dagger} : \theta\in \mathbb{R} \}.
        \end{equation}
    \end{enumerate}
\end{proposition}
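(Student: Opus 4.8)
The plan is to diagonalize everything. First I would observe that since $H$ is Hermitian with eigenvalues $\pm 1$ and $H^2 = I$, conjugating by any unitary $V$ gives $G = VHV^\dagger$ a Hermitian involution with the same spectrum, so $e^{-i\frac{\theta}{2} G}$ is a well-defined element of $SU(2)$ (its determinant is $e^{-i\frac{\theta}{2}(1)}e^{-i\frac{\theta}{2}(-1)} = 1$). Concretely, using $G^2 = I$, Euler's formula gives $e^{-i\frac{\theta}{2} G} = \cos(\frac{\theta}{2}) I - i \sin(\frac{\theta}{2}) G$. For part (1), I would use the standard fact that every $U \in SU(2)$ can be written in the Pauli-axis form $U = \cos(\frac{\alpha}{2}) I - i \sin(\frac{\alpha}{2})(\hat{\bm{n}} \cdot \bm{\sigma})$ for some angle $\alpha \in \mathbb{R}$ and some unit vector $\hat{\bm{n}} \in \mathbb{R}^3$; this is essentially the quaternion parametrization already used in Section~\ref{fqs_section}. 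Then it suffices to find $V$ with $V H V^\dagger = \hat{\bm{n}} \cdot \bm{\sigma}$ and set $\theta = \alpha$. Since $\hat{\bm{n}} \cdot \bm{\sigma}$ is a Hermitian involution with eigenvalues $\pm 1$, it is unitarily equivalent to $H$ (both are diagonalizable with the same spectrum): pick $V$ to map an orthonormal eigenbasis of $H$ to one of $\hat{\bm{n}} \cdot \bm{\sigma}$, matching $+1$ eigenvectors to $+1$ eigenvectors. That settles existence.

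For part (2), suppose $U \neq \pm I$ and $e^{-i\frac{\theta_j}{2} V_j H V_j^\dagger} = U$ for $j = 1, 2$. Write $G_j = V_j H V_j^\dagger$. The key observation is that the set $\{e^{-i\frac{\theta}{2} G_j} : \theta \in \mathbb{R}\}$ is exactly the one-parameter subgroup generated by $G_j$, which (since $G_j$ is a traceless Hermitian involution, i.e. proportional to $\hat{\bm{m}}_j \cdot \bm{\sigma}$ for a unit vector $\hat{\bm{m}}_j$) is the circle subgroup $\{\cos(\frac{\theta}{2}) I - i\sin(\frac{\theta}{2}) G_j\}$. So I need to show $G_1$ and $G_2$ generate the same circle subgroup, equivalently $G_1 = \pm G_2$ (the sign ambiguity comes from $\theta \mapsto -\theta$ reversing the axis but tracing the same circle). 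From $\cos(\frac{\theta_1}{2}) I - i\sin(\frac{\theta_1}{2}) G_1 = \cos(\frac{\theta_2}{2}) I - i\sin(\frac{\theta_2}{2}) G_2 = U$, I would match the "scalar part" and the "traceless part": taking $\frac{1}{2}\mathrm{Tr}$ gives $\cos(\frac{\theta_1}{2}) = \cos(\frac{\theta_2}{2})$ (call it $c$), and subtracting $cI$ from $U$ gives $-i\sin(\frac{\theta_1}{2}) G_1 = -i\sin(\frac{\theta_2}{2}) G_2 = U - cI$. Since $U \neq \pm I$, the traceless part $U - cI$ is nonzero, hence $\sin(\frac{\theta_j}{2}) \neq 0$ for both $j$, and therefore $G_1 = \frac{\sin(\theta_2/2)}{\sin(\theta_1/2)} G_2$; comparing operator norms (both $G_j$ have norm $1$) forces the ratio to be $\pm 1$, so $G_1 = \pm G_2$. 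Then the two circle subgroups coincide, which is exactly \eqref{eq:uniqueness-of-orbit}.

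I do not expect a serious obstacle here; the argument is linear algebra once the problem is phrased in the Pauli/quaternion language. The one point requiring a little care is the degenerate case handling: I must use the hypothesis $U \neq \pm I$ precisely to guarantee the traceless part of $U$ is nonzero so that I can divide by $\sin(\frac{\theta_j}{2})$ — when $U = \pm I$ any axis works and the orbits genuinely differ, which is why that case is excluded. A secondary bookkeeping point is being explicit that $\{e^{-i\frac{\theta}{2} G} : \theta \in \mathbb{R}\}$ depends on $G$ only up to sign, so that $G_1 = \pm G_2$ really does imply equality of the two sets; this is immediate from the Euler-formula expression and the substitution $\theta \mapsto -\theta$. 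I would also remark that the representation in part (1) is visibly not unique even up to the stated equivalence only when $U = \pm I$, consistent with the hypothesis in part (2).
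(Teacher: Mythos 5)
Your proof is correct, and it follows the same overall strategy as the paper's: diagonalize, construct $V$ in part (1) by matching an orthonormal eigenbasis of $H$ to one of the target involution, and reduce part (2) to showing $V_1HV_1^\dagger=\pm V_2HV_2^\dagger$, after which equality of the two orbits is immediate from the substitution $\theta\mapsto-\theta$. The difference lies in how the key step of part (2) is executed. The paper writes both sides of the hypothesis as spectral decompositions with eigenprojectors $P_\pm=V_1\ketbra{h_\pm}{h_\pm}V_1^\dagger$ and $Q_\pm=V_2\ketbra{h_\pm}{h_\pm}V_2^\dagger$ and invokes uniqueness of the spectral decomposition (using $U\neq\pm I$ to guarantee the eigenvalues $e^{\mp i\theta_j/2}$ are distinct) to conclude $P_\pm=Q_\pm$ or $P_\pm=Q_\mp$. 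You instead work in the axis--angle form $U=\cos(\tfrac{\theta}{2})I-i\sin(\tfrac{\theta}{2})G$, extract the scalar part with $\tfrac12\operatorname{Tr}$ (legitimate since $\operatorname{Tr}H=0$ makes each $G_j$ traceless), and compare the traceless remainders by norm to get $G_1=\pm G_2$. Your version is slightly more computational and self-contained, replacing the appeal to uniqueness of spectral decompositions by a one-line trace-and-norm argument; the paper's version generalizes more directly beyond $2\times2$. Both proofs use $U\neq\pm I$ in the same essential place (you to ensure $\sin(\tfrac{\theta_j}{2})\neq0$, the paper to ensure $e^{-i\theta_j/2}\neq e^{i\theta_j/2}$). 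The only step you could make fully explicit is why the traceless part is nonzero: if $U=cI$ with $c=\cos(\tfrac{\theta_1}{2})$ real, then $\det U=c^2=1$ forces $c=\pm1$ and hence $U=\pm I$, contradicting the hypothesis.
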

\begin{proof}
    Let $\gamma\in \mathbb{R}$ be such that the eigenvalues of $U$ are $e^{\pm i\gamma}$, and let $(\ket{u_+},\ket{u_-})$ and $(\ket{h_+},\ket{h_-})$ be orthonormal bases for $\mathbb{C}^2$  such that the spectral decompositions for $U$ and $H$ are
    \begin{equation*}
        U = e^{i\gamma}\ketbra{u_+}{u_+} + e^{-i\gamma}\ketbra{u_-}{u_-}
    \end{equation*}
    and
    \begin{equation*}
        H = \ketbra{h_+}{h_+} - \ketbra{h_-}{h_-}.
    \end{equation*}
    Then 
    \begin{equation*}
        e^{-i\frac{\theta}{2} VHV^\dagger} = e^{-i\frac{\theta}{2}}V\ketbra{h_+}{h_+}V^\dagger + e^{i\frac{\theta}{2}}V\ketbra{h_-}{h_-}V^\dagger,
    \end{equation*}
    so~\labelcref{eq:V-theta-U-equality} holds if, for example, $\theta=-2\gamma$ and $V\in U(2)$ is such that $V\ket{h_\pm}=\ket{u_\pm}$.

    Assume then that $U\neq\pm I$. Let $(V_j,\theta_j)\in U(2)\times\mathbb{R}$, $j=1,2$, be such that~\labelcref{eq:two-thetas-Vs} holds, and write $P_\pm := V_1\ketbra{h_\pm}{h_\pm}V_1^\dagger$ and $Q_\pm := V_2\ketbra{h_\pm}{h_\pm}V_2^\dagger$. Then
    \begin{equation*}
        e^{-i\frac{\theta_1}{2}} P_+ + e^{i\frac{\theta_1}{2}} P_-
        = e^{-i\frac{\theta_2}{2}} Q_+ + e^{i\frac{\theta_2}{2}} Q_-,
    \end{equation*}
    where  $e^{-i\frac{\theta_1}{2}}\neq e^{i\frac{\theta_1}{2}}$ and $e^{-i\frac{\theta_2}{2}}\neq e^{i\frac{\theta_2}{2}}$.

    The uniqueness of the spectral decomposition implies that either $P_\pm=Q_\pm$, in which case $V_1HV_1^\dagger=V_2HV_2^\dagger$, or $P_\pm=Q_\mp$, in which case $V_1HV_1^\dagger=-V_2HV_2^\dagger$. In either case~\labelcref{eq:uniqueness-of-orbit} follows.
\end{proof}

To solve the angle $\theta$ and the unitary matrix $V$, we first compose a unitary matrix $U$ from the quaternion representation. We note that 

\begin{equation}
R(\bm{q}) = \bm{q}\cdot \vec{\varsigma} = q_0 I - i(q_1\sigma_x + q_2 \sigma_y + q_3 \sigma_z),
\end{equation}
\\
from which we obtain the matrix form as follows:

\begin{equation}
    U = R(\bm{q}) = 
    \begin{pmatrix} 
    q_0 - iq_3  & -q_2 - iq_1 \\
    q_2 - iq_1  & q_0 + iq_3
    \end{pmatrix}.
\end{equation}\\
Then, we solve the eigenvectors and eigenvalues for $U$ as well as the eigenvectors for the matrix $H$ (i.e., $\sigma_z$). After that, we form new matrices $S$ and $Q$ whose columns are the eigenvectors of $U$ and $H$, respectively.  Finally, we calculate the unitary matrix $V$ as the product of $S$ and $Q$ and angle $\theta$ from the first eigenvalue $\lambda$ of the matrix $U$

\begin{align}
    V &= S Q^\dagger, \\[0.2cm]
    \frac{\theta}{2} &= \arctan\left( \frac{\Im(\lambda)}{\Re(\lambda)}\right).
\end{align}

\begin{figure*}
    \centering
    \includegraphics[width=0.99\linewidth]{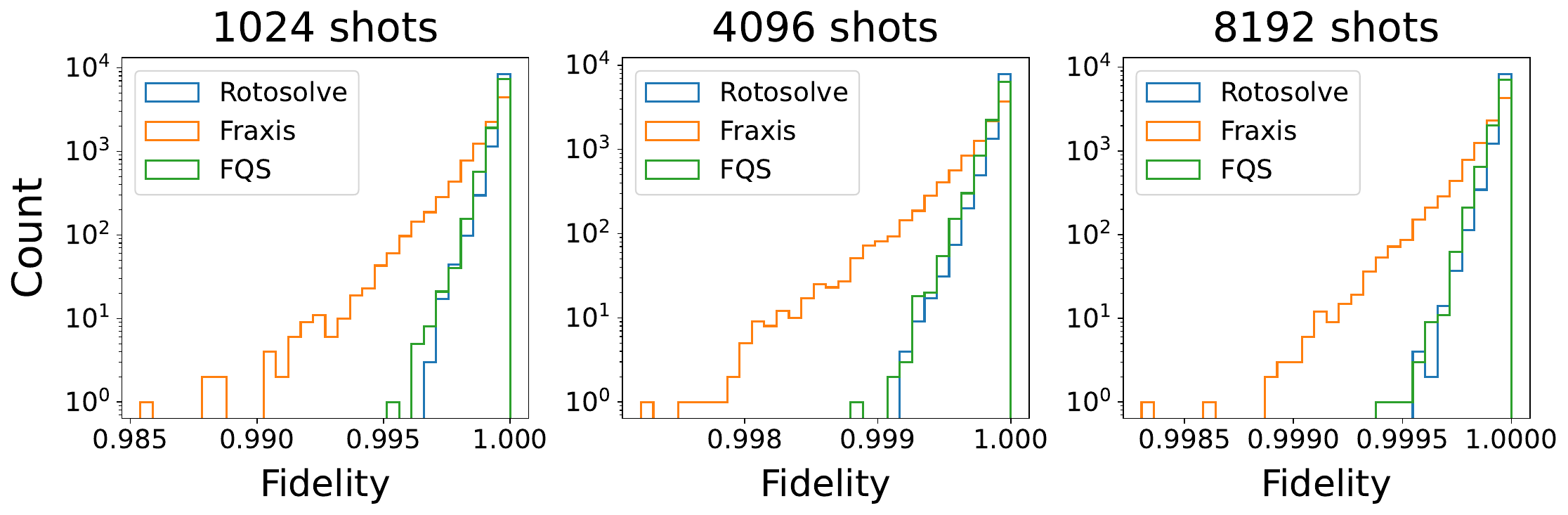}
    \cprotect\caption{Gate fidelities with 1024, 4096, and 8192 counts of shot noise compared to gate optimization done with the statevector simulator for \verb|Rotosolve| (blue), \verb|Fraxis| (orange), and \verb|FQS| (green) algorithms. Each histogram consists of $10^4$ trials.}
    \label{gate_fidelity}
\end{figure*}

% The \nocite command causes all entries in a bibliography to be printed out
% whether or not they are actually referenced in the text. This is appropriate
% for the sample file to show the different styles of references, but authors
% most likely will not want to use it.
%\nocite{*}
%\printbibliography

\section{Gate Fidelities and Execution Times}\label{gate_fidelity_and_time_appendix}

\begin{figure}
    \centering
    \includegraphics[width=0.99\linewidth]{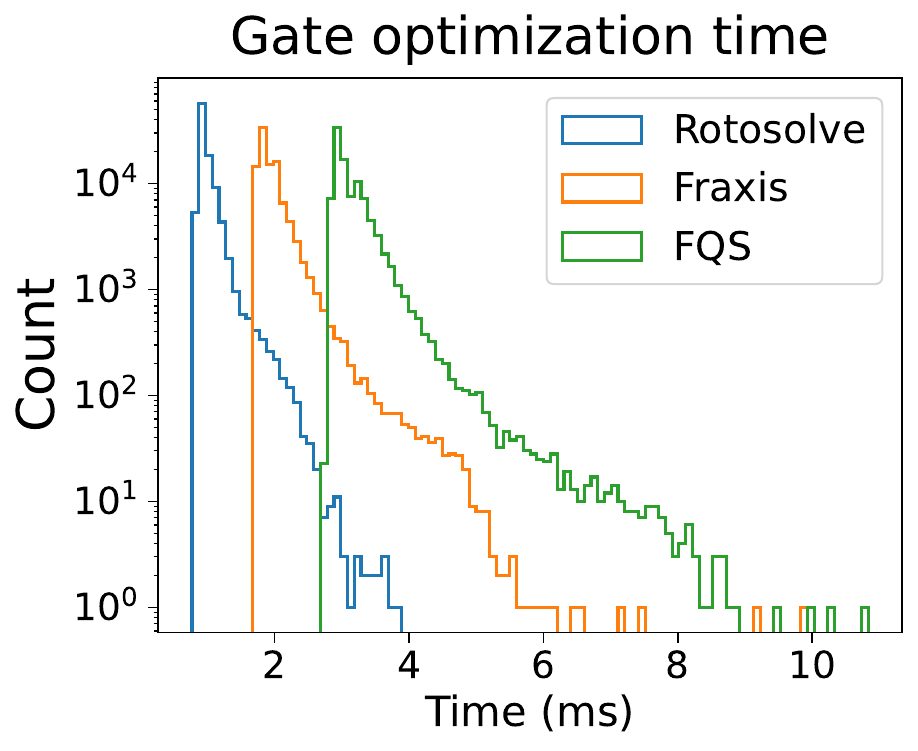}
    \cprotect\caption{Gate execution times for \verb|Rotosolve| (blue), \verb|Fraxis| (orange), and \verb|FQS| (green) algorithms performed with the statevector simulator. Each histogram denotes a distribution of $10^5$ trials for the corresponding algorithm and is composed of 100 bins. The vertical axis is the count of each bin, and the horizontal axis is the time of gate optimization in milliseconds.}
    \label{gate_execution_time}
\end{figure}

\begin{table}
\centering
\scalebox{1.3}{%
\begin{tabular}{|c|c|c|c|} 
 \hline
 Algorithm & Rotosolve  & Fraxis & FQS \\ %[0.1ex] 
 \hline\hline
 mean (ms) & 1.0020 & 1.9815 &  3.1687 \\ 
 \hline
 std (ms) & 0.1864 & 0.3066 & 0.4012 \\
 \hline
 var (ms) & 0.0348 &  0.0940 & 0.1610 \\
 \hline
\end{tabular}}
\caption{Mean gate parameter initialization and optimization times for each algorithm, as well as standard deviation and variance in milliseconds.} 
\label{gate_execution_table}
\end{table}

We analyze the single-qubit gate fidelities obtained using the statevector simulator and with different numbers of shots for the \verb|Rotosolve|, \verb|Fraxis|, and \verb|FQS| algorithms. Accounting for the shot noise in the measurements provides a quantitative measure of how robust the algorithms are against the shot noise. We compare the gate fidelity between the optimization performed with the statevector simulator and with the shot noise. The parameters of each algorithm were kept identical for both the statevector simulator and when accounting for the shot noise. We consider a single-qubit Hermitian operator defined as the sum of Pauli matrices, $H=X + Y + Z$, and evaluate the gate fidelities using 1024, 4096, and 8192 shots. A total of $10^4$ trials were performed, and in each trial, the initial parameters were drawn from a uniform distribution for the corresponding algorithm as described at the beginning of Sec.~\ref{results_section}. Then the optimization is performed separately using the statevector simulator and simulations with shot noise. Finally, the fidelity between the optimized single-qubit gates is computed.

The results for each shot count are presented in Fig.~\ref{gate_fidelity}. Across all shot counts \verb|Rotosolve| and \verb|FQS| obtain similar gate fidelity distributions, whereas \verb|Fraxis| has a broader distribution of gate fidelities compared to the result obtained with a statevector simulator. The gate fidelity improves with increasing shot count, corresponding to more accurate measurements. Nevertheless, the overall trends remain largely unchanged.

Additionally, we evaluated the computational time required to optimize the single gate with \verb|Rotosolve|, \verb|Fraxis|, and \verb|FQS| algorithms with the statevector simulator. The distribution of $10^5$ trials for each algorithm is shown in Fig.~\ref{gate_execution_time}, while the numerical mean, standard deviation, and variance are provided in Table~\ref{gate_execution_table}. The optimization was performed using the Pennylane \verb|lightning.qubit| device, which enables fast statevector simulation. We used an Intel i7-14700F CPU, and the execution times were measured using the \verb|timeit| Python package. As one might expect, the execution times are proportional to the number of measurements required to optimize a single gate. From Table~\ref{gate_execution_table}, the gate execution time for \verb|Fraxis| is, on average, twice as much as that of \verb|Rotosolve|. This coincides in terms of required circuit evaluations to optimize the gate, as \verb|Rotosolve| requires 3 circuit evaluations and \verb|Fraxis| requires 6 circuit evaluations. Comparing \verb|Rotosolve| to \verb|FQS|, we nearly have the same ratio with respect to gate execution times and the required circuit evaluations. However, in Fig.~\ref{gate_execution_time}, the \verb|FQS| exhibits a broader distribution of gate execution times than \verb|Fraxis| or \verb|Rotosolve|, respectively.

\newpage

\bibliography{apssamp}% Produces the bibliography via BibTeX.

@article{Kandala_2017,
   title={Hardware-efficient variational quantum eigensolver for small molecules and quantum magnets},
   volume={549},
   ISSN={1476-4687},
   url={http://dx.doi.org/10.1038/nature23879},
   DOI={10.1038/nature23879},
   number={7671},
   journal={Nature},
   publisher={Springer Science and Business Media LLC},
   author={Kandala, Abhinav and Mezzacapo, Antonio and Temme, Kristan and Takita, Maika and Brink, Markus and Chow, Jerry M. and Gambetta, Jay M.},
   year={2017},
   month=sep, pages={242–246} 
}

@article{McClean_2018,
   title={Barren plateaus in quantum neural network training landscapes},
   volume={9},
   ISSN={2041-1723},
   url={http://dx.doi.org/10.1038/s41467-018-07090-4},
   DOI={10.1038/s41467-018-07090-4},
   number={1},
   journal={Nature Communications},
   publisher={Springer Science and Business Media LLC},
   author={McClean, Jarrod R. and Boixo, Sergio and Smelyanskiy, Vadim N. and Babbush, Ryan and Neven, Hartmut},
   year={2018},
   month=nov 
}

@misc{Utkarsh2023Chemistry_H2,
  title = {PennyLane Quantum Chemistry Datasets},  
  author = {Utkarsh Azad and Stepan Fomichev}, 
  howpublished = {\url{https://pennylane.ai/datasets/h2-molecule}}, 
  year = {2023}
}

@article{sto_3g_basis,
  title={Self-consistent molecular-orbital methods. I. Use of Gaussian expansions of Slater-type atomic orbitals},
  author={Hehre, Warren J and Stewart, Robert F and Pople, John A},
  journal={The Journal of Chemical Physics},
  volume={51},
  number={6},
  pages={2657--2664},
  year={1969},
  publisher={American Institute of Physics}
}

@misc{bergholm2022pennylaneautomaticdifferentiationhybrid,
      title={PennyLane: Automatic differentiation of hybrid quantum-classical computations}, 
      author={Ville Bergholm and Josh Izaac and Maria Schuld and Christian Gogolin and Shahnawaz Ahmed and Vishnu Ajith and M. Sohaib Alam and Guillermo Alonso-Linaje and B. AkashNarayanan and Ali Asadi and Juan Miguel Arrazola and Utkarsh Azad and Sam Banning and Carsten Blank and Thomas R Bromley and Benjamin A. Cordier and Jack Ceroni and Alain Delgado and Olivia Di Matteo and Amintor Dusko and Tanya Garg and Diego Guala and Anthony Hayes and Ryan Hill and Aroosa Ijaz and Theodor Isacsson and David Ittah and Soran Jahangiri and Prateek Jain and Edward Jiang and Ankit Khandelwal and Korbinian Kottmann and Robert A. Lang and Christina Lee and Thomas Loke and Angus Lowe and Keri McKiernan and Johannes Jakob Meyer and J. A. Montañez-Barrera and Romain Moyard and Zeyue Niu and Lee James O'Riordan and Steven Oud and Ashish Panigrahi and Chae-Yeun Park and Daniel Polatajko and Nicolás Quesada and Chase Roberts and Nahum Sá and Isidor Schoch and Borun Shi and Shuli Shu and Sukin Sim and Arshpreet Singh and Ingrid Strandberg and Jay Soni and Antal Száva and Slimane Thabet and Rodrigo A. Vargas-Hernández and Trevor Vincent and Nicola Vitucci and Maurice Weber and David Wierichs and Roeland Wiersema and Moritz Willmann and Vincent Wong and Shaoming Zhang and Nathan Killoran},
      year={2022},
      eprint={1811.04968},
      archivePrefix={arXiv},
      primaryClass={quant-ph},
      url={https://arxiv.org/abs/1811.04968}, 
}

@article{Ostaszewski_2021,
   title={Structure optimization for parameterized quantum circuits},
   volume={5},
   ISSN={2521-327X},
   url={http://dx.doi.org/10.22331/q-2021-01-28-391},
   DOI={10.22331/q-2021-01-28-391},
   journal={Quantum},
   publisher={Verein zur Forderung des Open Access Publizierens in den Quantenwissenschaften},
   author={Ostaszewski, Mateusz and Grant, Edward and Benedetti, Marcello},
   year={2021},
   month=jan, pages={391} 
}

@article{cerezo2021variational,
  title={Variational quantum algorithms},
  author={Cerezo, Marco and Arrasmith, Andrew and Babbush, Ryan and Benjamin, Simon C and Endo, Suguru and Fujii, Keisuke and McClean, Jarrod R and Mitarai, Kosuke and Yuan, Xiao and Cincio, Lukasz and others},
  journal={Nature Reviews Physics},
  volume={3},
  number={9},
  pages={625--644},
  year={2021},
  publisher={Nature Publishing Group UK London}
}

@misc{fraxis,
      title={Optimizing Parameterized Quantum Circuits with Free-Axis Selection}, 
      author={Hiroshi C. Watanabe and Rudy Raymond and Yu-ya Ohnishi and Eriko Kaminishi and Michihiko Sugawara},
      year={2023},
      eprint={2104.14875},
      archivePrefix={arXiv},
      primaryClass={quant-ph},
      url={https://arxiv.org/abs/2104.14875}, 
}

@article{Preskill2018quantumcomputingin,
  doi = {10.22331/q-2018-08-06-79},
  url = {https://doi.org/10.22331/q-2018-08-06-79},
  title = {Quantum {C}omputing in the {NISQ} era and beyond},
  author = {Preskill, John},
  journal = {{Quantum}},
  issn = {2521-327X},
  publisher = {{Verein zur F{\"{o}}rderung des Open Access Publizierens in den Quantenwissenschaften}},
  volume = {2},
  pages = {79},
  month = aug,
  year = {2018}
}

@article{quantum_error_correction,
   title={Quantum error correction: an introductory guide},
   volume={60},
   ISSN={1366-5812},
   url={http://dx.doi.org/10.1080/00107514.2019.1667078},
   DOI={10.1080/00107514.2019.1667078},
   number={3},
   journal={Contemporary Physics},
   publisher={Informa UK Limited},
   author={Roffe, Joschka},
   year={2019},
   month=jul, pages={226–245} 
}

@article{quantum_error_mitigation1,
   title={Quantum error mitigation},
   volume={95},
   ISSN={1539-0756},
   url={http://dx.doi.org/10.1103/RevModPhys.95.045005},
   DOI={10.1103/revmodphys.95.045005},
   number={4},
   journal={Reviews of Modern Physics},
   publisher={American Physical Society (APS)},
   author={Cai, Zhenyu and Babbush, Ryan and Benjamin, Simon C. and Endo, Suguru and Huggins, William J. and Li, Ying and McClean, Jarrod R. and O’Brien, Thomas E.},
   year={2023},
   month=dec }

@article{quantum_error_mitigation2,
  title={Practical quantum error mitigation for near-future applications},
  author={Endo, Suguru and Benjamin, Simon C and Li, Ying},
  journal={Physical Review X},
  volume={8},
  number={3},
  pages={031027},
  year={2018},
  publisher={APS}
}

@article{noise_resilience_algo1,
  title={Noise resilience of variational quantum compiling},
  author={Sharma, Kunal and Khatri, Sumeet and Cerezo, Marco and Coles, Patrick J},
  journal={New Journal of Physics},
  volume={22},
  number={4},
  pages={043006},
  year={2020},
  publisher={IOP Publishing}
}

@article{quantum_chemistry,
  title={Variational quantum simulation for quantum chemistry},
  author={Li, Yifan and Hu, Jiaqi and Zhang, Xiao-Ming and Song, Zhigang and Yung, Man-Hong},
  journal={Advanced Theory and Simulations},
  volume={2},
  number={4},
  pages={1800182},
  year={2019},
  publisher={Wiley Online Library}
}

@article{qml_reinforcement_learning,
  title={Variational quantum circuits for deep reinforcement learning},
  author={Chen, Samuel Yen-Chi and Yang, Chao-Han Huck and Qi, Jun and Chen, Pin-Yu and Ma, Xiaoli and Goan, Hsi-Sheng},
  journal={IEEE access},
  volume={8},
  pages={141007--141024},
  year={2020},
  publisher={IEEE}
}

@article{qml_generative,
  title={Variational quantum generators: Generative adversarial quantum machine learning for continuous distributions},
  author={Romero, Jonathan and Aspuru-Guzik, Al{\'a}n},
  journal={Advanced Quantum Technologies},
  volume={4},
  number={1},
  pages={2000003},
  year={2021},
  publisher={Wiley Online Library}
}

@article{Wada_2022,
   title={Simulating time evolution with fully optimized single-qubit gates on parametrized quantum circuits},
   volume={105},
   ISSN={2469-9934},
   url={http://dx.doi.org/10.1103/PhysRevA.105.062421},
   DOI={10.1103/physreva.105.062421},
   number={6},
   journal={Physical Review A},
   publisher={American Physical Society (APS)},
   author={Wada, Kaito and Raymond, Rudy and Ohnishi, Yu-ya and Kaminishi, Eriko and Sugawara, Michihiko and Yamamoto, Naoki and Watanabe, Hiroshi C.},
   year={2022},
   month=jun 
}

@article{qml_qcnn,
  title={Quantum convolutional neural network based on variational quantum circuits},
  author={Gong, Li-Hua and Pei, Jun-Jie and Zhang, Tian-Feng and Zhou, Nan-Run},
  journal={Optics Communications},
  volume={550},
  pages={129993},
  year={2024},
  publisher={Elsevier}
}

@article{gaussian_init,
  title={Escaping from the barren plateau via gaussian initializations in deep variational quantum circuits},
  author={Zhang, Kaining and Liu, Liu and Hsieh, Min-Hsiu and Tao, Dacheng},
  journal={Advances in Neural Information Processing Systems},
  volume={35},
  pages={18612--18627},
  year={2022}
}

@article{bayesian_learning_init,
  title={Surviving the barren plateau in variational quantum circuits with bayesian learning initialization},
  author={Rad, Ali and Seif, Alireza and Linke, Norbert M},
  journal={arXiv preprint arXiv:2203.02464},
  year={2022}
}

@article{flip_init,
  title={FLIP: A flexible initializer for arbitrarily-sized parametrized quantum circuits},
  author={Sauvage, Frederic and Sim, Sukin and Kunitsa, Alexander A and Simon, William A and Mauri, Marta and Perdomo-Ortiz, Alejandro},
  journal={arXiv preprint arXiv:2103.08572},
  year={2021}
}

@article{quantum_state_tomography,
   title={Variational quantum circuits for quantum state tomography},
   volume={101},
   ISSN={2469-9934},
   url={http://dx.doi.org/10.1103/PhysRevA.101.052316},
   DOI={10.1103/physreva.101.052316},
   number={5},
   journal={Physical Review A},
   publisher={American Physical Society (APS)},
   author={Liu, Yong and Wang, Dongyang and Xue, Shichuan and Huang, Anqi and Fu, Xiang and Qiang, Xiaogang and Xu, Ping and Huang, He-Liang and Deng, Mingtang and Guo, Chu and Yang, Xuejun and Wu, Junjie},
   year={2020},
   month=may }

@misc{quantum_state_prep,
      title={Variational preparation of entangled states on quantum computers}, 
      author={Vu Tuan Hai and Nguyen Tan Viet and Le Bin Ho},
      year={2023},
      eprint={2306.17422},
      archivePrefix={arXiv},
      primaryClass={quant-ph},
      url={https://arxiv.org/abs/2306.17422}, 
}

@misc{gottesman2009introductionquantumerrorcorrection,
      title={An Introduction to Quantum Error Correction and Fault-Tolerant Quantum Computation}, 
      author={Daniel Gottesman},
      year={2009},
      eprint={0904.2557},
      archivePrefix={arXiv},
      primaryClass={quant-ph},
      url={https://arxiv.org/abs/0904.2557}, 
}

@article{fqs_algo,
  title={Sequential optimal selection of a single-qubit gate and its relation to barren plateau in parameterized quantum circuits},
  author={Wada, Kaito and Raymond, Rudy and Sato, Yuki and Watanabe, Hiroshi C},
  journal={arXiv preprint arXiv:2209.08535},
  year={2022}
}

@article{param_shift_rule,
  title={Gradients of parameterized quantum gates using the parameter-shift rule and gate decomposition},
  author={Crooks, Gavin E},
  journal={arXiv preprint arXiv:1905.13311},
  year={2019}
}

@article{adam_opt,
  title={Adam: A method for stochastic optimization},
  author={Kingma, Diederik P and Ba, Jimmy},
  journal={arXiv preprint arXiv:1412.6980},
  year={2014}
}

\end{document}